\documentclass[12pt]{article}
\pdfoutput=1
\usepackage{amsmath,tikz,amsthm}
\usepackage{graphicx,psfrag,epsf}
\usepackage{enumerate}
\usepackage{url} 

\usepackage{hyperref,amssymb,amsmath,amsthm}
\hypersetup{colorlinks=True}
\usepackage{graphicx,xcolor,float}
\usepackage{subfig,setspace,array}

\graphicspath{{./Images/}}

\usepackage{algorithm}
\usepackage[noend]{algpseudocode}

\newtheorem{theorem}{Theorem}

\newcommand{\blind}{0}

\addtolength{\oddsidemargin}{-.5in}%
\addtolength{\evensidemargin}{-.5in}%
\addtolength{\textwidth}{1in}%
\addtolength{\textheight}{1.3in}%
\addtolength{\topmargin}{-.8in}%
\usepackage{multirow}


\begin{document}

\def\spacingset#1{\renewcommand{\baselinestretch}%
{#1}\small\normalsize} \spacingset{1}


\if0\blind
{
  \title{\bf Mathematics of Nested Districts: The Case of Alaska}
  \author{Sophia Caldera\thanks{Author order is alphabetical.  The authors acknowledge the Prof. Amar G. Bose Research Grant at MIT 
and the Jonathan M. Tisch College of Civic Life at Tufts University for ongoing support. 
MD is partially supported by NSF DMS-1255442.
SG is partially supported by  NSF DGE-1650441.  
SG thanks David P. Williamson and Kenrick Bjelland for helpful discussions.
We thank Coly Elhai, Mallory Harris, Claire Kelling, Samir Khan,
and Jack Snoeyink
for substantial joint work and conversations on other approaches
to modeling Alaska redistricting, and particularly acknowledge
Samir Khan for his initial implementation of prune-and-choose.
Hakeem Angulu, Ruth Buck, and Max Hully provided excellent data and technical support.
Finally, we thank Anchorage Assemblyman Forrest Dunbar for bringing this
problem to our attention. }\\
  Harvard University\\
  Daryl DeFord\\
    CSAIL, Massachusetts Institute of Technology\\
    Moon Duchin \\
    Department of Mathematics, Tufts University\\
    Samuel C. Gutekunst\\
    Cornell University\\
   Cara Nix\\
   University of Minnesota }
  \maketitle
} \fi

\if1\blind
{
  \bigskip
  \bigskip
  \bigskipd
  \begin{center}
    {\LARGE\bf Mathematics of Nested Districts: The Case of Alaska}
\end{center}
  \medskip
} \fi

\bigskip
\begin{abstract}
In eight states,  a "nesting rule" requires that each state Senate district  be exactly composed of two adjacent state House districts. In this paper we investigate the potential impacts of these nesting rules with a focus on Alaska, where Republicans have a 2/3 majority in the Senate while a Democratic-led coalition controls the House. 
Treating the current House plan as fixed
and considering all possible pairings, we find that the choice of pairings alone can create a swing of 4-5 seats out of 20 against recent voting patterns, which is similar to the range observed when using a Markov chain procedure to generate plans without the nesting constraint.
The analysis enables other insights into Alaska districting, including the 
partisan latitude available to districters with and without strong rules about nesting and contiguity.  
 \end{abstract}

\noindent%
{\it Keywords:}  Redistricting, Gerrymandering, Markov Chains, Nesting
\vfill


\newpage
\spacingset{1.45} 

\section{Introduction: Nesting}

A great deal of recent attention has been given to the problem of detecting  gerrymandering using mathematical and statistical tools. 
Much of this work has been restricted to gerrymandering in its classical form:  the manipulation of district boundaries to favor one party or another.
However, some states' rules of redistricting 
create other opportunities to extract partisan advantage from control of the process.
For example, many states favor plans that keep counties and cities intact rather 
than splitting them between districts; Iowa even requires that congressional plans keep all of its counties 
intact within districts.  Some observers worry about
whether such seemingly neutral rules would turn out to have partisan or racial consequences for
representation.  (See, for instance, \cite{VA-criteria}).  
In this paper, we will focus on a class of redistricting principles
called {\em nesting rules}, which require or encourage 
that state-level Senate districts be composed of pairs of neighboring State House or Assembly districts.

Our present case study is the state of Alaska, where 40 House districts
are paired into 20 Senate districts. We  start by focusing on the scenario in which House districts are fixed first, then subsequently paired into Senate districts. We  select two recent elections to get a baseline of partisan preference at the precinct level, then compare the current Senate plan to all others that can be formed from the current House districts by pairing.  
Across all these scenarios, we will discuss when and why the choice of pairing, or {\em perfect matching}, can have a sizeable impact on electoral outcomes.

\subsection{Perfect matching interpretation}
There are eight states that currently have
two single-member House/Assembly districts nested in each Senate district. In six of those (AL, IL, MN, MT, OR, WY),
nesting is required by State Constitution or statute, and in the 
remaining two (IA, NV), there are provisions explaining possible exemptions.
There are an additional two
states (OH, WI) that require nesting of three single-member House districts within each Senate 
district.\footnote{The article of the Ohio constitution
with this requirement was in effect in 2011 but has  now been repealed, effective 2021.}
Additionally, California, Hawaii, and New York call for nesting ``if possible.''

\begin{center}
\begin{tabular}{llll}
\small Alaska & \small 40 House $\to$ 20 Sen  & \small Illinois & \small 118 House $\to$ 59 Sen\\ 
\small Iowa & \small 100 House $\to$ 50 Sen & \small Minnesota & \small 134 House $\to$ 67 Sen \\
\small Montana & \small 100 House $\to$ 50 Sen & \small Nevada & \small 42 House $\to$ 21 Sen\\
\small  Oregon & \small 60 House $\to$ 30 Sen&\small Wyoming & \small 60 House $\to$ 30 Sen  \\
&&&\\
\small Ohio & \small 99 House $\to$ 33 Sen &
\small Wisconsin & \small 99 House $\to$ 33 Sen
\end{tabular}

\end{center}

From the perspective of election administration, nesting is convenient because it reduces the number of different ballot styles needed.  From the perspective of redistricting, nesting means that the composition of one house of the legislature massively constrains the space of possible districting plans for the other,  arguably cutting down the latitude for gerrymandering.  

When nesting is mandated, procedures can still vary.
According to the Brennan Center's {\em Citizen's Guide to Redistricting} \cite{levitt2008citizen}:
``Sometimes, a nested redistricting plan is created by drawing Senate districts first,
and dividing them in half to form Assembly districts; sometimes the Assembly
districts are drawn first, and clumped together to form Senate districts.''  This paper will focus on the second case:
matching, rather than splitting.

\subsubsection*{Proof of concept}
We begin by constructing a toy example to illustrate
that matchings matter.  
Consider the map shown in Figure \ref{fig:mapgraph1}, where each square cell represents a voter.  The 56 voters are grouped into eight equally-sized, contiguous House districts, each of which is indicated by a different color.  Geographically adjacent House districts (those overlapping on an edge) are to be paired to form four Senate districts.  
It is convenient to represent the geographic relationship of the districts with a {\em dual graph}: each {\em node} (or vertex) corresponds to a single  district, and two nodes are connected by an edge if the corresponding districts are geographically adjacent. Pairing these House districts into four Senate districts corresponds to choosing a \emph{perfect matching} in the graph: a set of four edges that, together, cover each of the eight nodes exactly once.  (See Figure~\ref{fig:mapgraph2}.)

\begin{figure}[ht]
\centering
\begin{tikzpicture}[scale=.6]
\node at (0,0) {\includegraphics[width=1.8in]{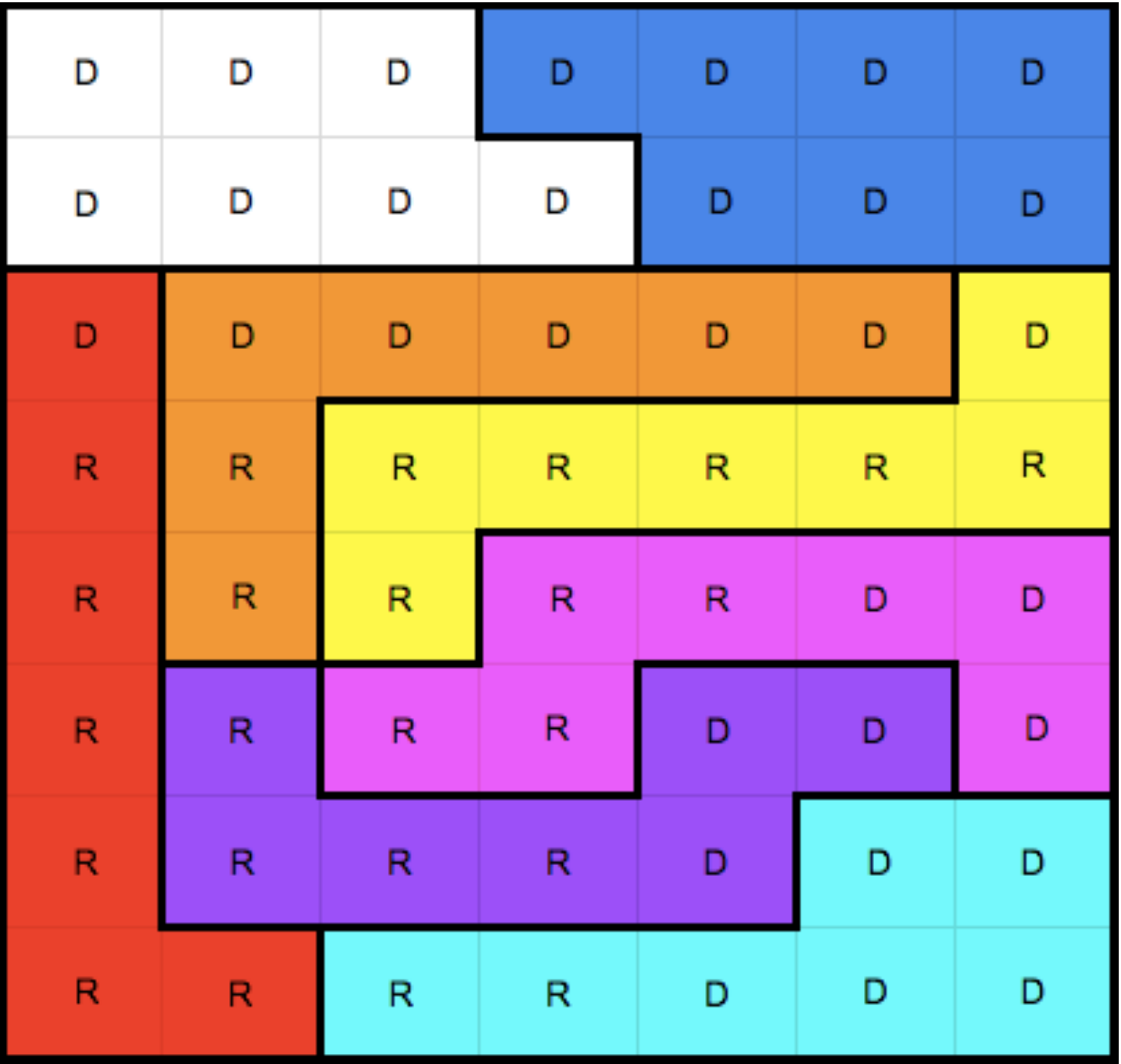}};
\begin{scope}[xshift=5.8cm,yshift=2.6cm]
    \node[circle,fill=white,draw,minimum size=1em,inner sep=2pt] (W) at (0.5,-0.5) {};
    \node[circle,fill=blue,draw,minimum size=1em,inner sep=2pt] (B) at (3.5,0) {};
    \node[circle,fill=red,draw,minimum size=1em,inner sep=2pt] (R) at (0,-4) {};
    \node[circle,fill=orange,draw,minimum size=1em,inner sep=2pt] (O) at (2,-2) {};
    \node[circle,fill=yellow,draw,minimum size=1em,inner sep=2pt] (Y) at (5,-1.5) {};
    \node[circle,fill=magenta,draw,minimum size=1em,inner sep=2pt] (M) at (5, -3)  {};
    \node[circle,fill=violet,draw,minimum size=1em,inner sep=2pt] (V) at (2.5, -3.5) {};
    \node[circle,fill=cyan,draw,minimum size=1em,inner sep=2pt] (C) at (4, -5) {};

    \draw (W) -- (B);
    \draw (W) -- (R);
    \draw (W) -- (O);
    \draw (B) -- (Y);
    \draw (B) -- (O);
    \draw (O) -- (R);
    \draw (V) -- (R);
    \draw (C) -- (R);
    \draw (O) -- (Y);    
    \draw (O) -- (V);
    \draw (Y) -- (M);
    \draw (M) -- (C);
    \draw (M) -- (V);
    \draw (C) -- (V);
\end{scope}
\end{tikzpicture}
\caption{At left, an illustrative map of 56 voters in eight equally-sized House districts to be paired into four Senate districts.  At right, the {\em dual graph} that encodes  districts  adjacency. \label{fig:mapgraph1}}
\end{figure}

There are exactly eight perfect matchings of this graph;
in other words, 
given these House districts, there are only 8 ways to form four Senate districts while respecting nesting.
By contrast, 
there are 2,332,394,150 ways to create four 
(contiguous,  equal-size) Senate districts from these 56 units 
without that restriction.\footnote{This is the number of partitions
of a $7\times 8$ grid graph into four contiguous ``districts'' of 
14 nodes each.
See \url{mggg.org/table.html} for 
a discussion of enumeration patterns 
for districting problems on grids, and a link to enumeration code.}
If we name 
the colors White, Blue, Red, Orange, Yellow, Magenta, Cyan, and Violet, we can represent the matchings in the table below.

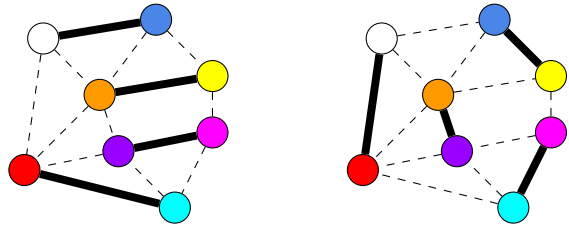
\begin{figure}[h!]
\centering
  \begin{tabular}{  | c | c | c | c | c | c | c | }
    \cline{1-3} \cline{5-7}
    \textbf{\footnotesize Matching} & \textbf{\footnotesize Results} & \textbf{\footnotesize \# D} & &
     \textbf{\footnotesize Matching} & \textbf{\footnotesize Results} & \textbf{\footnotesize \# D}     
    \\ 
    \cline{1-3} \cline{5-7}
    \scriptsize WB/RC/OY/MV & \scriptsize D/R/R/R & 1 & & \scriptsize WO/BY/RC/VM & \scriptsize D/D/R/R & 2 \\
     \scriptsize WB/RO/YM/CV & \scriptsize D/R/R/D & 2 &  & \scriptsize WO/BY/RV/CM & \scriptsize D/D/R/D & 3 \\
   \scriptsize WB/RV/CM/OY & \scriptsize D/R/D/R & 2 &  &  \scriptsize WR/BO/YM/CV & \scriptsize D/D/R/D & 3 \\ 
    \scriptsize WB/RC/OV/MY & \scriptsize D/R/D/R & 2 &  &  \scriptsize WR/BY/OV/CM & \scriptsize D/D/D/D & 4 \\
    \cline{1-3} \cline{5-7}
  \end{tabular}
  
  \vspace{.2in}

\begin{tikzpicture}[scale=.5]
\definecolor{blue}{HTML}{4a86e8}
\definecolor{red}{HTML}{FF0000}
\definecolor{orange}{HTML}{FF9900}
\definecolor{yellow}{HTML}{FFFF00}
\definecolor{magenta}{HTML}{FF00FF}
\definecolor{violet}{HTML}{9900FF}
\definecolor{cyan}{HTML}{00FFFF}
\begin{scope}
    \node[circle,fill=white,draw,minimum size=1em,inner sep=2pt] (W) at (0.5,-0.5) {};
    \node[circle,fill=blue,draw,minimum size=1em,inner sep=2pt] (B) at (3.5,0) {};
    \node[circle,fill=red,draw,minimum size=1em,inner sep=2pt] (R) at (0,-4) {};
    \node[circle,fill=orange,draw,minimum size=1em,inner sep=2pt] (O) at (2,-2) {};
    \node[circle,fill=yellow,draw,minimum size=1em,inner sep=2pt] (Y) at (5,-1.5) {};
    \node[circle,fill=magenta,draw,minimum size=1em,inner sep=2pt] (M) at (5, -3)  {};
    \node[circle,fill=violet,draw,minimum size=1em,inner sep=2pt] (V) at (2.5, -3.5) {};
    \node[circle,fill=cyan,draw,minimum size=1em,inner sep=2pt] (C) at (4, -5) {};
    \draw[line width=3] (W) -- (B);
    \draw[dashed] (W) -- (R);
    \draw[dashed] (W) -- (O);
    \draw[dashed] (B) -- (Y);
    \draw[dashed] (B) -- (O);
    \draw[dashed] (O) -- (R);
    \draw[dashed] (V) -- (R);
    \draw[line width=3] (C) -- (R);
    \draw[line width=3] (O) -- (Y);    
    \draw[dashed] (O) -- (V);
    \draw[dashed] (Y) -- (M);
    \draw[dashed] (M) -- (C);
    \draw[line width=3] (M) -- (V);
    \draw[dashed] (C) -- (V);
\end{scope}

\begin{scope}[xshift=9cm]
    \node[circle,fill=white,draw,minimum size=1em,inner sep=2pt] (W) at (0.5,-0.5) {};
    \node[circle,fill=blue,draw,minimum size=1em,inner sep=2pt] (B) at (3.5,0) {};
    \node[circle,fill=red,draw,minimum size=1em,inner sep=2pt] (R) at (0,-4) {};
    \node[circle,fill=orange,draw,minimum size=1em,inner sep=2pt] (O) at (2,-2) {};
    \node[circle,fill=yellow,draw,minimum size=1em,inner sep=2pt] (Y) at (5,-1.5) {};
    \node[circle,fill=magenta,draw,minimum size=1em,inner sep=2pt] (M) at (5, -3)  {};
    \node[circle,fill=violet,draw,minimum size=1em,inner sep=2pt] (V) at (2.5, -3.5) {};
    \node[circle,fill=cyan,draw,minimum size=1em,inner sep=2pt] (C) at (4, -5) {};

    \draw[line width=3] (W) -- (R);
    \draw[dashed] (W) -- (O);
    \draw[dashed] (W) -- (B);
    \draw[line width=3] (B) -- (Y);
    \draw[dashed] (B) -- (O);
    \draw[dashed] (O) -- (R);
    \draw[dashed] (V) -- (R);
    \draw[dashed] (C) -- (R);
    \draw[dashed] (O) -- (Y);    
    \draw[line width=3] (O) -- (V);
    \draw[dashed] (Y) -- (M);
    \draw[line width=3] (M) -- (C);
    \draw[dashed] (M) -- (V);
    \draw[dashed] (C) -- (V);
\end{scope}
\end{tikzpicture}
\caption{The districts can be matched the eight different ways listed here, leading to the Democratic party getting anywhere from 25\% to 100\% of the Senate seats.
The two perfect matchings corresponding to the extreme outcomes are shown here.\label{fig:mapgraph2}}
\end{figure}

In this toy example, we discover that 
the choice of matching can swing the outcome for Democrats from 1 seat to 4 seats out of four.
Below, we carry out a similar analysis on real-world data.

\subsection{Mathematical literature on perfect matchings}\label{sec:lit}

In our motivating example, we considered the Senate outcomes for every possible perfect matching in a small graph.  Enumerating all perfect matchings in a given graph is a classical problem in the mathematical field of combinatorics; 
it has captured significant 
attention because it is at once quite elementary and extremely difficult to compute 
for arbitrary graphs \cite{valiant}.
The matching problem is also of great interest to 
physicists studying dimer coverings (domino tilings) of lattices, which are used to estimate thermodynamic
behavior of liquids \cite{kenyon-okounkov}.
In 1961, three statistical physicists,  Temperley, Fisher, and  Kasteleyn, independently and nearly
simultaneously derived the formula for the number of perfect
matchings of an $m\times n$ grid  \cite{kasteleyn_statistics_1961,temperley_dimer_1961} and subsequently proposed the FKT algorithm
for efficiently computing the
number of perfect matchings of any {\em planar} graph (that is, in any graph that
can be drawn in the plane without  edges crossing).
The algorithm is discussed in more detail
in Appendix B in the supplement.
For surveys on the mathematics of matching, see \cite{Lov09} and Volume A of \cite{Sch03}.%
\footnote{This problem is also intimately related to a second
enumeration problem, that of counting the
{\em spanning trees} of a graph.  The number of spanning
trees is sometimes called the {\em complexity} of a graph.
Temperley defined a transformation that starts with
a graph and creates a new associated graph called its
$T$-graph.
A series of remarkable theorems tell us that if $G$ is
the $T$-graph associated to $\overline G$, then
the number of spanning trees of $\overline G$
is {\em exactly}
equal to the number of perfect matchings of $G$ \cite{burton-pemantle,kenyon,kpw,temperley}.}

\subsection{Paper outline}\label{sec:outline}

The central research question here is to quantify the 
partisan advantage available to an agent who is empowered only 
to select a House-to-Senate pairing.
In Alaska, where there are only 40 House districts which 
are patterned in a not very dense manner, it might seem that 
there is only limited advantage to be gained.  However, we will demonstrate
that the choice of pairings alone can create a swing of 4-5 seats out of 20 against recent voting patterns.  
In fact, we will see that even though pairings give a far simpler model of how to create Senate districts,
they give just as much partisan latitude as making Senate districts from scratch.  

We begin by reviewing pertinent background on Alaska politics, demographics, and  redistricting rules in \S\ref{sec:AK}, culminating in the selection of two recent elections---the Governor and U.S. House races of 2018---to serve as our electoral baselines for the remainder of the analysis.  
In \S\ref{sec:methods}, we begin by describing the construction of dual graphs that model the adjacencies of geographical 
units-- in this case, House districts. Next, we overview the algorithmic approaches we apply to those graphs in the rest of the
paper. These methods include enumerating matchings with a classic algorithm called FKT, constructing sets of matchings with a depth-first algorithm we call {\em prune-and-choose} described in Appendix C, and finally varying the underlying districts with a Markov chain.  The proof of validity for prune-and-choose
is found in Appendix C.2.

In the remainder of the paper, we report on the results of these
algorithmic investigations for Alaska. Beyond the flexibility inherent in choosing the nesting, 
we find that the interpretation of redistricting rules (in particular, geographic adjacency when regions are connected by water) has a substantial impact on the number of matchings.  With the current House districts fixed, \S\ref{sec:altmatch} measures the partisan tilt of the pairing itself among the full set of matchings.
Finally, in \S\ref{sec:chain}, we vary the House and Senate districts themselves by randomly assembling them
from precinct building blocks with a method that provides heuristic assurances of representative sampling.
By exploring the space of valid plans, and evaluate expected partisan properties and matchability of alternative plans.

 In Appendix D we extend this analysis by enumerating the perfect matchings
in each of the eight states that mandate two-to-one nesting.
For several of these states, it would be computationally infeasible
to construct the complete set of matchings because it is prohibitively large;
nonetheless,  Appendix E  describes how they can 
 be sampled efficiently.

\section{Alaska electoral politics}\label{sec:AK}

\subsection{Partisanship in Alaska}

Alaska is an outlier in U.S. political geography
for several reasons including its
uniquely wide array of viable minor parties featured
in both local and statewide races.
For example, the state officially 
recognizes the secessionist Alaskan Independence Party, which 
succeeded in electing Wally Hickel as Governor in 1990.
In addition, there are nine organized ``political groups''
that are seeking official recognition, and meanwhile are entitled
to run candidates for statewide office:  the Libertarian,
Constitution,
Progressive, Moderate, Green, and Veterans Parties, together with 
the more fringe OWL Party, Patriot's Party, and 
UCES Clowns Party.\footnote{See  \url{http://www.elections.alaska.gov/Core/politicalgroups.php}.}

The current Governor of Alaska is 
Republican Mike Dunleavy, whose predecessor
Bill Walker won as an 
Independent in 2014, becoming the only sitting U.S. Governor 
not from one of the
two major parties at the time. (Walker had previously left the Republican Party and then successfully ran as an Independent candidate, with a Democratic candidate for Lieutenant Governor.) In 2014, Alaska had the first U.S. Senator in more than 50 years to win election as a write-in candidate, Senator Lisa Murkowski.

\begin{figure}[!ht]
\centering
\includegraphics[height=1.6in]{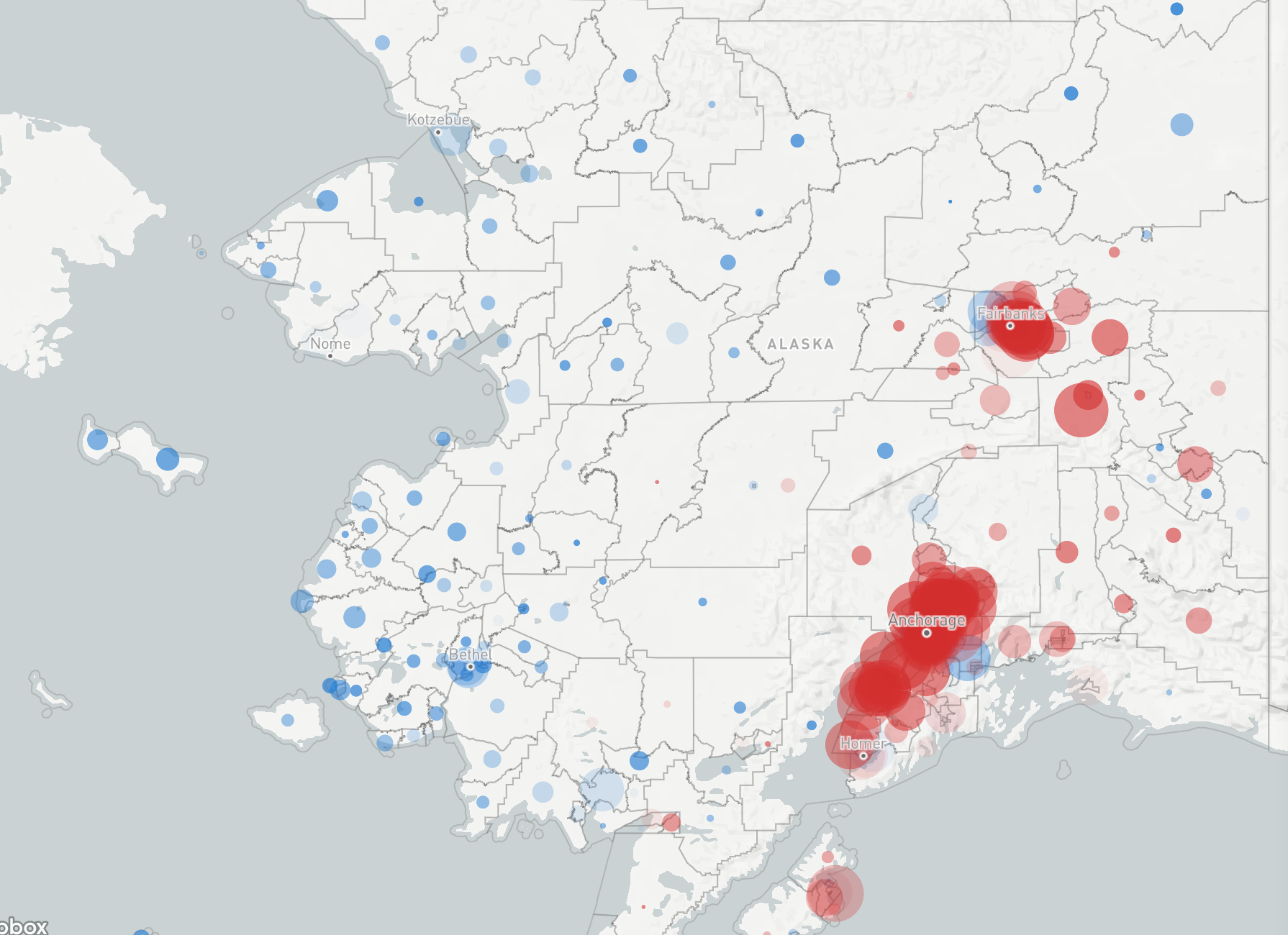}\quad 
\includegraphics[height=1.6in]{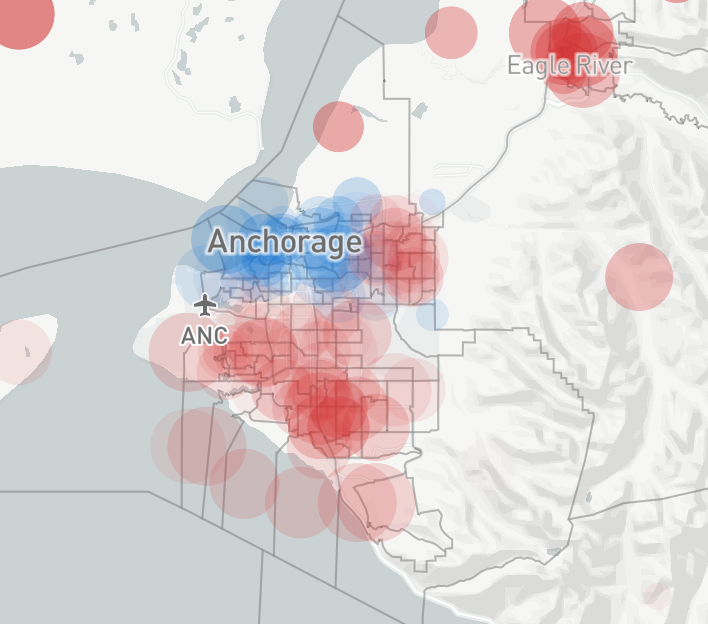}
\caption{Trump share of major-party Presidential
vote per precinct, sized by number of votes cast.}
\end{figure}

Alaska is also unique in its geographic distribution of the major parties' strengths. Unlike the contiguous United States, where urban areas tend to be most reliably Democratic, Alaska has Democratic strength in the rural areas to the north and west of the state. These areas are the homes of significant numbers of Native Alaskan residents, who constitutes the largest minority in the state. Conversely, the Republican vote is often stronget in suburban areas.
In fact, even the city of Anchorage---by far the most
populous in the state with 291,826 out of
Alaska's 710,231 residents in Census 2010---votes
Republican overall in recent presidential
races, making it a rare city of its size
to do so.\footnote{In the 2016 Presidential race, Trump's share of the major party 
vote was 58.4\% statewide and 
53.1\% in Anchorage.  The trend 
holds up across elections in the last cycle,
with Republican performance in Anchorage trailing statewide levels by only
about four points.}

Although it is one of the ``reddest'' states in 
national terms, the Republican-Democratic split is not the fundamental divide in Alaskan politics. 
Extremely conservative Republicans are sometimes 
balanced by a tenuous coalition of  moderate Republicans, Democrats, and Independents, which 
currently aligns to give net Democratic control
in the state House.
In 2018, an Independent, Libertarian, or Nonpartisan candidate ran in nine of the 40 House districts; an Independent won in one district and one Democratic candidate changed his affiliation to undeclared after winning \cite{BP}. In areas where the Democratic party label is an obstacle to election, running as an Independent can be a successful political strategy. The majority caucus in the House originally consisted of 25 members: all 15 Democrats, the two unaffiliated members, and eight  Republicans \cite{BP}.\footnote{Twenty-one members (15 Democratic, 4 Republican, and two unaffiliated) voted together to elect the current Speaker (who ran for his House seat as a Democrat but became unaffiliated just days before being elected Speaker).  Four more Republicans joined to establish the majority caucus.  In May 2019, however, one Republican left the House majority coalition \cite{BP, AP}.} On the other hand, one state Senator elected as a Democrat caucuses with the Republican majority in that body \cite{hoffman}.

\subsection{Racial demographics and the Voting Rights Act}

The 2010 Census reports Alaska's racial demographics as roughly 6\% Hispanic,
with non-Hispanic population comprising
63\% White, 3.5\% Black, 
5.5\% Asian, and 15\% Alaska Native or other Native American as shares of the total. 
An additional 9\% of residents are recorded as belonging to other races,
or to two or more races.  Figure 
\ref{fig:native} shows the proportion of Alaska Native or other Native American residents across the state.

\begin{figure}[h]
    \centering
    \includegraphics[height=1.6in]{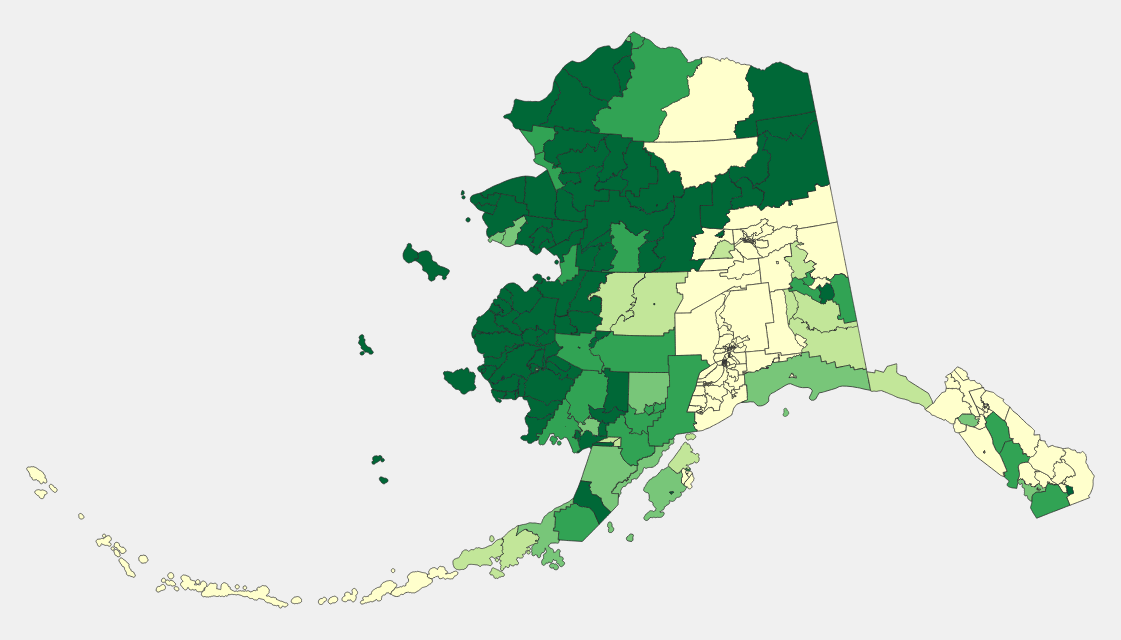}
\caption{Proportion of Alaska Native or other Native American residents across Alaska.
The color scale is in equal intervals of
20\%; the darkest shade marks
precincts that are 80-100\% 
Native.\label{fig:native}}
\end{figure}

The large Alaska Native population
has long been singled out for federal protection under the  Voting Rights Act of 1965, specifically
Section 5 of the VRA, which  required covered  jurisdictions to seek prior federal approval
(or ``preclearance'') for any changes to districts or other voting laws.
Alaska's inclusion owed to a long history
of discriminatory ``literacy tests''---in this case, English-language tests used to deny
voting eligibility to Native residents---making Alaska one of only nine states covered in full by the special
protections \cite{DOJ}.\footnote{Since 1971, the indigenous people of Alaska 
are organized into thirteen regional Tribal Corporations to administer land and finances.  The current legal landscape
gives the corporations substantial financial clout, which does not translate to commensurate political representation
for the broader Native Alaskan population.}
Though the Supreme Court ended
the practice of preclearance with
Shelby v. Holder (2013), all states are still bound by the 
VRA requirement to afford minority groups the ability to elect a candidate of their choice where possible.\footnote{State courts have established that  Alaska
redistricters must consider the state's constitutional requirements for districts before considering the requirements of the Voting Rights Act. Alaska's courts have enforced this hierarchy several times, including most recently in a 2012 ruling that invalidated the maps used in that year's elections \cite{Mauer_ADN2013}.}
Issues of fair representation and ballot access for the rural
Native population are still highly active in Alaska \cite{caldwell}.

\subsection{Redistricting rules and practices}\label{AK-rules}

Following a 1998 state constitutional amendment, a five-member Alaska Redistricting
Board  was formed to draw new district lines after each decennial census \cite{Epler_ADN2011}. The House speaker, Senate president, and Chief Justice of the state Supreme Court each choose one member of the board, and the Governor chooses two.
At least three members of the board must approve a redistricting plan for it to be adopted. The board must draw maps in accordance with the state Constitution, which requires that House districts be ``contiguous and compact territory containing as nearly as practicable a relatively integrated socio-economic area.... {[}and{]} contain{[}ing{]} a population as near as practicable to the quotient obtained by dividing the population of the state by forty'' while Senate districts are simply  ``...composed as near as practicable of two contiguous house districts'' without further constraints \cite{constitution}.\footnote{As far as we are
aware, the socio-economic clause has never been operationalized or enforced.}

Balancing the  requirements of the VRA and the  guidelines of the state Constitution---compactness in particular---means that Alaska's House and Senate districts have to be drawn in a coordinated fashion in most of the state.
However, Alaska's relatively urban centers of Anchorage and Fairbanks are both 
predominantly white and made up of small, regular pieces. This homogeneity of demographics and geography provides additional flexibility in these regions for the map drawer to construct House districts first, without considering potential Senate pairings. 

Allegations of partisan intent have frequently been leveled at the redistricting process in Alaska. The maps drawn after the 2000 Census were accused of being a Democratic gerrymander, while Democrats have called the post-2010 maps (drawn by a board with a 4-1 Republican majority) a Republican gerrymander \cite{Mauer_ADN2013}. 
The fact that a Democratic-led caucus controls the House while Republicans have 2/3 control 
of the Senate lends credence to the possibility that not the House
districts themselves, but their pairing to form Senate districts, is chosen for Republican advantage.  
That possibility is investigated below.

\subsection{Our choice of election data}

In Alaska, three types of races occur statewide. The entire state votes for a Governor and Lieutenant Governor, elected on a single ticket, every four years; they elect one member to the U.S. House of Representatives every two years; and they elect a U.S. Senator for a term of 6 years in the Class 2 and Class 3 cycles.

We consider only those elections which occurred after the implementation of new maps in July 2013. (A map approved for temporary use in 2012 was replaced after litigation.) 
Seven statewide races occurred in this time period:  

{\footnotesize
\begin{tabular}{rlcll}
Gov14& {\bf Walker} (I/D) & [48.10\%] &
Parnell (R) &[45.88\%] \\
Cong14& {\bf Young} (R) &[50.97\%] & Dunbar (D) &[40.97\%]\\
&&& McDermott (L) &[7.61\%]  \\
Sen14& {\bf Sullivan} (R) &[47.96\%] &
 Begich (D) &[45.83\%]  \\
Cong16& {\bf Young} (R)&[50.32\%] &
Lindbeck (D) &[36.02\%]\\  
&&& McDermott (L) &[10.31\%] \\
Sen16& {\bf Murkowski} (R) &[44.36\%] & 
Miller (L) & [29.16\%]\\ 
&&& Stock (I) &[13.23\%]\\
&&&Metcalfe (D) &[11.62\%] \\
Cong18& {\bf Young} (R) & [53.08\%] &
Galvin (D) & [46.50\%]  \\
Gov18& {\bf Dunleavy}
(R) & [51.44\%] & Begich (D)& [44.41\%] 
\end{tabular}
}

\noindent The list includes all candidates with at least 5\% of the vote in any race.%
\footnote{Walker ran for Gov14 as an Independent, but with a Democratic running mate.  In Gov18, Walker dropped
out and ultimately received just 2\% of the vote.}

\begin{figure}[!ht]
    \centering
   \begin{tikzpicture} [scale=.8]
\node at (0,0)   { \includegraphics[width=2.2in]{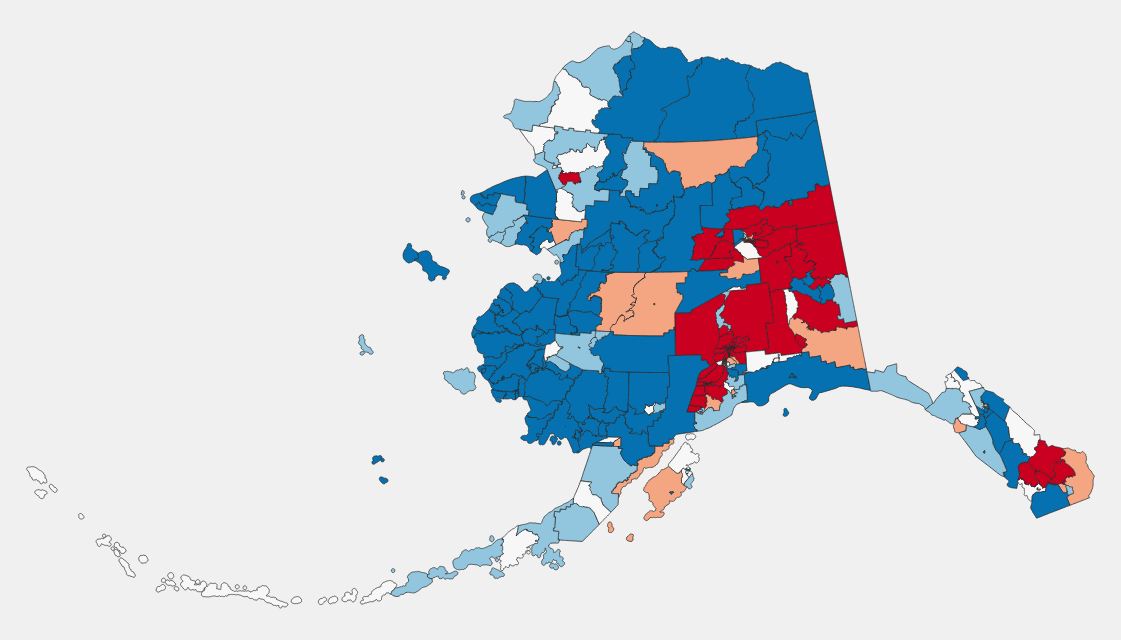}};
\node at (7.5,0)   { \includegraphics[width=2.2in]{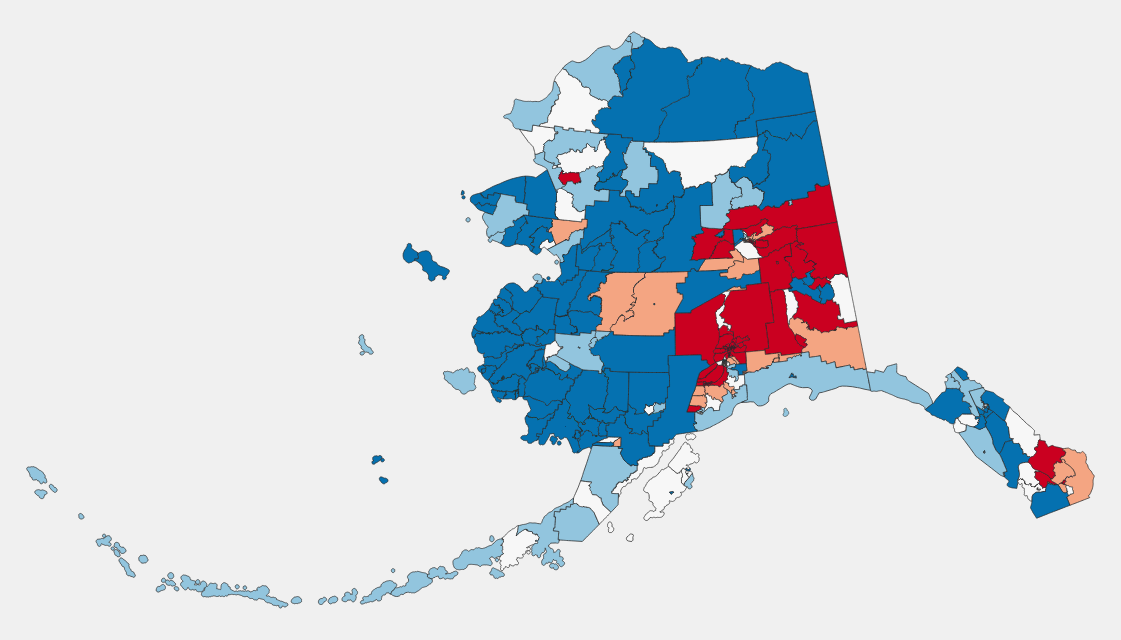}};
\node at (0,5)   { \includegraphics[width=2.2in]{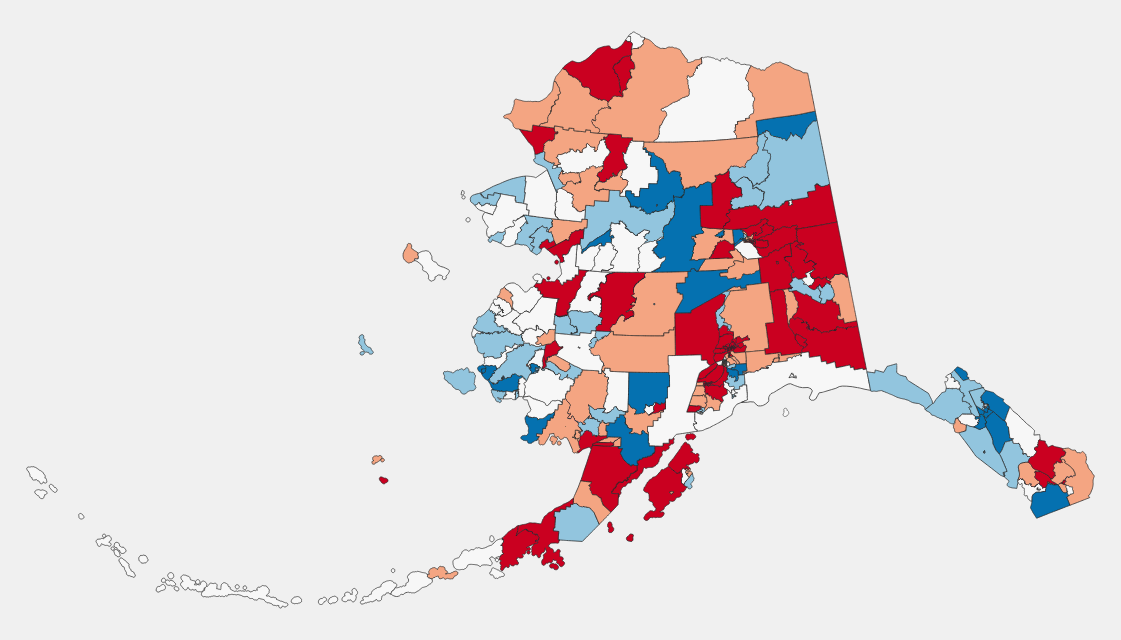}};
\node at (7.5,5)   { \includegraphics[width=2.2in]{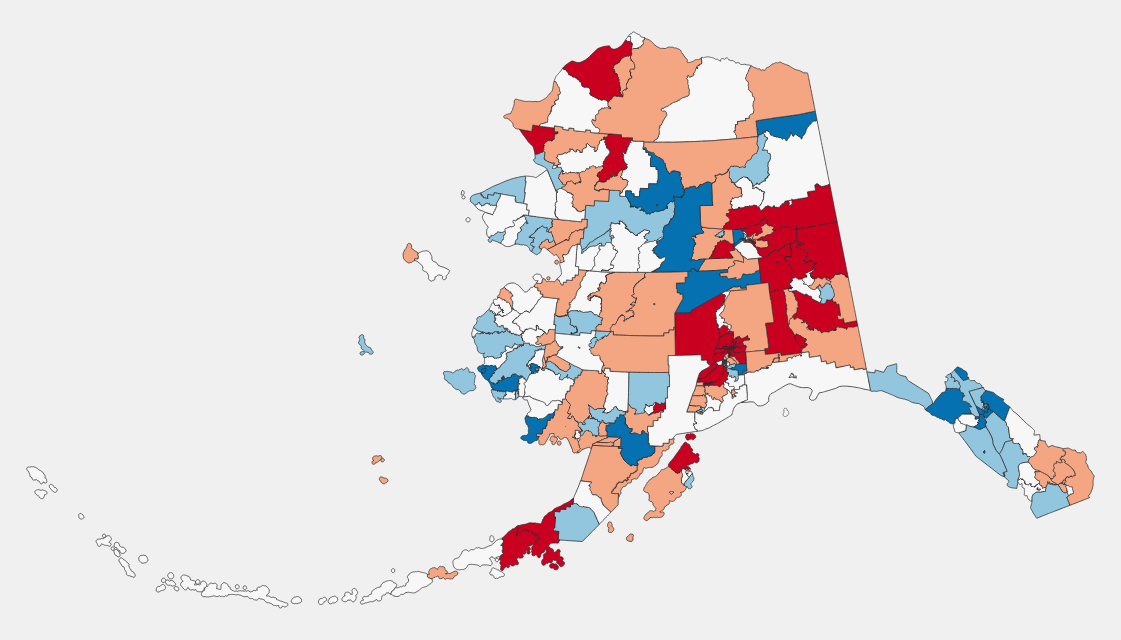}};
	\node at (0,-2.5) {\small Gov18-N party share};
	\node at (7.5,-2.5) {\small Gov18-A party share};
	\node at (0,2.5) {\small Cong18-N party share};
	\node at (7.5,2.5) {\small Cong18-A party share};
    \end{tikzpicture}
\caption{These choropleth images show that 
 party preferences in the Governor race are spatialized very differently from the U.S. House race,
even though the statewide party share is nearly identical.
On the other hand, there is  little visible change with and without including
absentee ballots (marked with A and N, respectively), though this does have a significant bottom-line partisan impact.}
\end{figure}


We will use the Cong18 and Gov18 races as the fundamental electoral data for the analysis below. 
These two  contests feature a  Democratic and  Republican candidate without major third-party presence
and are interesting because they have very 
different spatial patterns of party support but
similar bottom-line partisan shares.
The two-party vote share for the Democrat in 
those races was 46.7\% for Alyse Galvin against Don Young
for Congress and 46.3\% for Mark Begich against
Mike Dunleavy for Governor.%
\footnote{ We note that the question of preferring endogenous or exogenous election data for redistricting analysis is a live one in political science, as reflected for instance in the article, rejoinder, and response between Best et al and McGhee in the March 2018 issue of the Election Law Journal \cite{best_considering_2017,mcghee_rejoinder_2017,best_authors_2017}. Our research group inclines to the use of well-chosen exogenous (statewide) election
data in general, but we further note that using endogenous data would be forbidding in the Alaska legislature. Besides a significant number of uncontested races, these legislative races also feature a proliferation of minor parties (described above), making a regression analysis particularly inadequate to cleanly model voters' preferences between the two major parties.}

Alaskan elections receive a high proportion of ballots not reported through individual precincts. These unprecincted ballots include absentee, provisional, and early votes, which are all reported by legislative
House district.
For federal races such as the U.S. House, a small
number of overseas military ballots are also 
reported on a statewide basis. 
In the 2018 U.S. House race, 33.36\% of reported ballots were unprecincted. In the 2018 Governor race, 34.18\% of ballots were unprecincted. 
(For ease of reference, we will call all unprecincted
ballots  ``absentee''  below.)
We report results for each election both including and excluding the absentee ballots. 
Thus our four {\em election treatments} can be 
labeled Gov18-N, Gov18-A, Cong18-N, and Cong18-A,
where the N versions drop absentee ballots from the tally and the A versions include them.
In \S\ref{sec:chain}, we need to know 
the precinct location of the votes;
for this, we assign absentee ballots
to precincts in numbers proportional to 
precinct population.

 Table \ref{tab:elecdata}
shows clearly that absentee ballots collectively favor Democrats
by roughly two percentage points per House
district.  Only districts 37 and 38 have 
a Republican lean in the absentee ballots in 
either election.

\begin{table}[ht]
\centering
\includegraphics[width=4.5in]{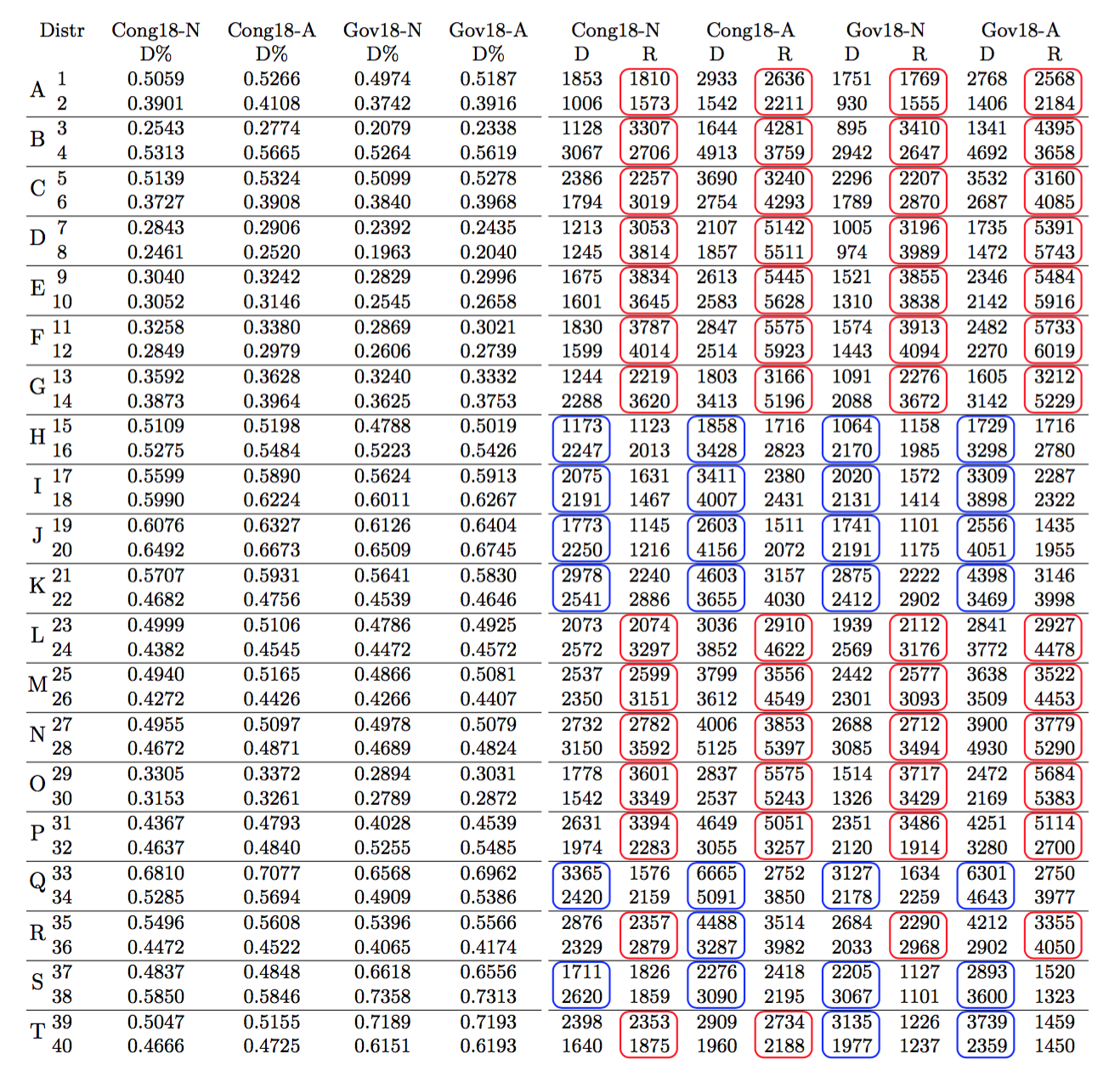}
\caption{This table shows how the currently enacted House and Senate districts fall with respect to the voting patterns from our two selected elections. The individual rows correspond to the 40 numbered state House districts, paired into the 20 state Senate districts labeled A-T. The highlighting in the votes columns shows which party  received the majority of the two-way vote share in the corresponding Senate district.}\label{tab:elecdata}
\end{table}

\newpage
\section{Data and methods}\label{sec:methods}

All experiments in this paper were performed on an Ubuntu 16.04 machine with 64 GB memory and an Intel Xeon Gold 6136 CPU (3.00GHz). Algorithmic descriptions of the FKT method for enumerating matchings and the Prune-and-Choose method for generating matchings are given in Appendix B and C of the Supplementary Material, respectively. 

\subsection{Election results, shapefiles, and dual graphs}
Election data were gathered from the \href{http://www.elections.alaska.gov/doc/info/ElectionResults.php}{Alaska elections website}  \cite{AK_Elections} and demographic data were obtained from the 2010 Census. Absentee and early voting information was only available by House district, so precinct-level data was assigned by prorating by population. 
We prepared the geospatial data with 
the \href{https://github.com/mggg/maup}{MGGG Preprocessing Suite}, which uses 
areal interpolation for blocks not fully
contained in precincts \cite{maup}.  The cleaned and processed version of the data is available on GitHub \cite{mggg-states}.

In the 2010 Census, Alaska had 45,292 census blocks, of which  over a third (18,263) are water-only.  
Alaska has 441 precincts, ranging in population from a minimum of 44 people (Pedro Bay) to 7994 people (JBER1, in Anchorage) \cite{AK_Elections}.  
Six precincts have over 5000 people, and sixteen have under 100.  

Beginning with a shapefile of the geography, in this instance precincts, we use geospatial libraries in Python 
to create a dual graph in {\sf GerryChain} whose nodes are the geographic units, and where two units are connected by an 
edge if their units share a positive-length boundary in the shapefile \cite{gerrychain}.  We then adjust edges to better
correspond to plausible notions of adjacency, especially when water is involved, as described below.

\subsubsection*{Water adjacency}

For areas connected only by water, a decision must be made about whether to count them as adjacent. 
To illustrate the impact of this seemingly
minor issue, we construct three different dual graphs of the 
precinct map, which we call the {\em tight},
{\em restricted}, and the 
{\em permissive} graphs.  

Permissive adjacency is the closest match to the AK Division of Elections precinct shapefile.  The dual graph of those precincts is nearly connected using this approach, except for one gap 
in the Kodiak archipelago and five additional island precincts 
of the West coast.  We manually added all visually reasonable connections in these cases.  Among the 441 precincts, this process
produces 1151 edges.  Aggregating the 
precincts into the 40 current House districts
produces a House district dual graph with 100 edges. 

\begin{figure}[!ht]
\centering
\includegraphics[height=1.6in]{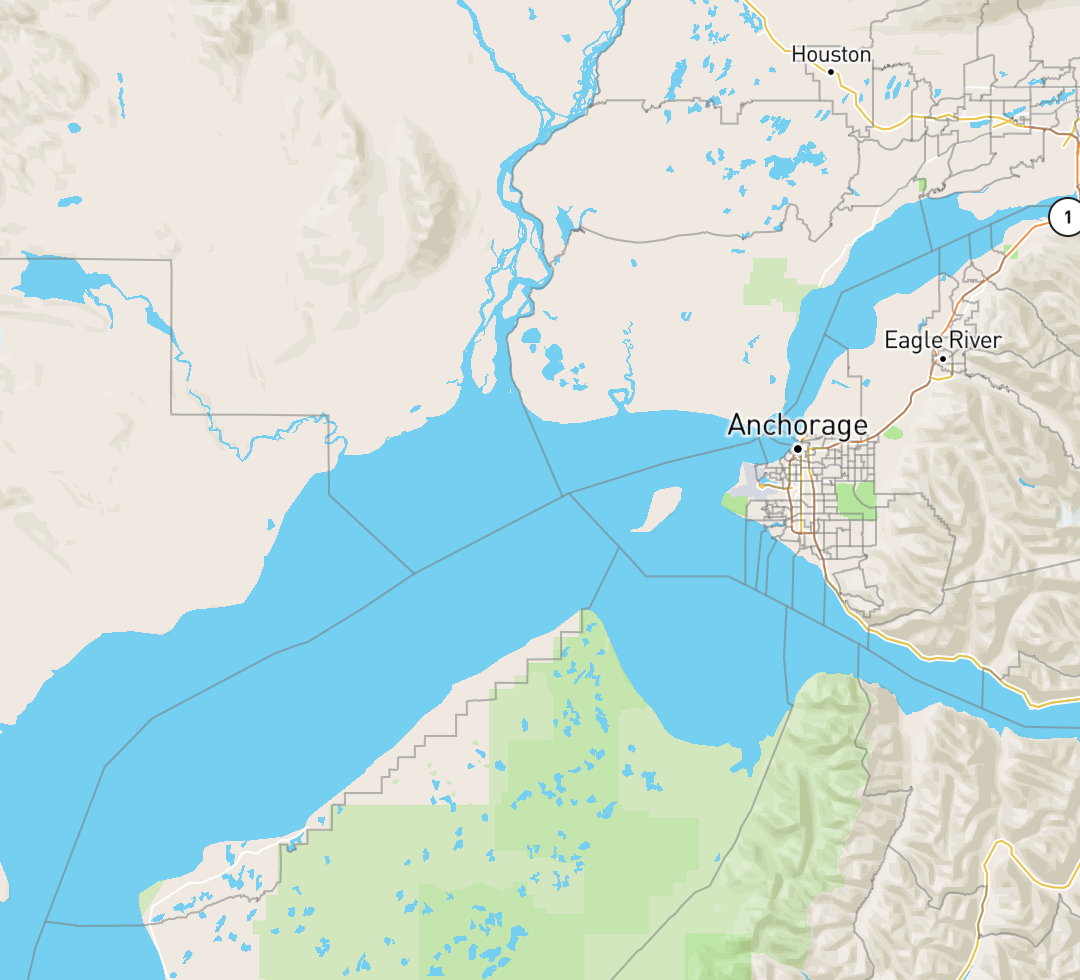} \quad 
\includegraphics[height=1.6in]{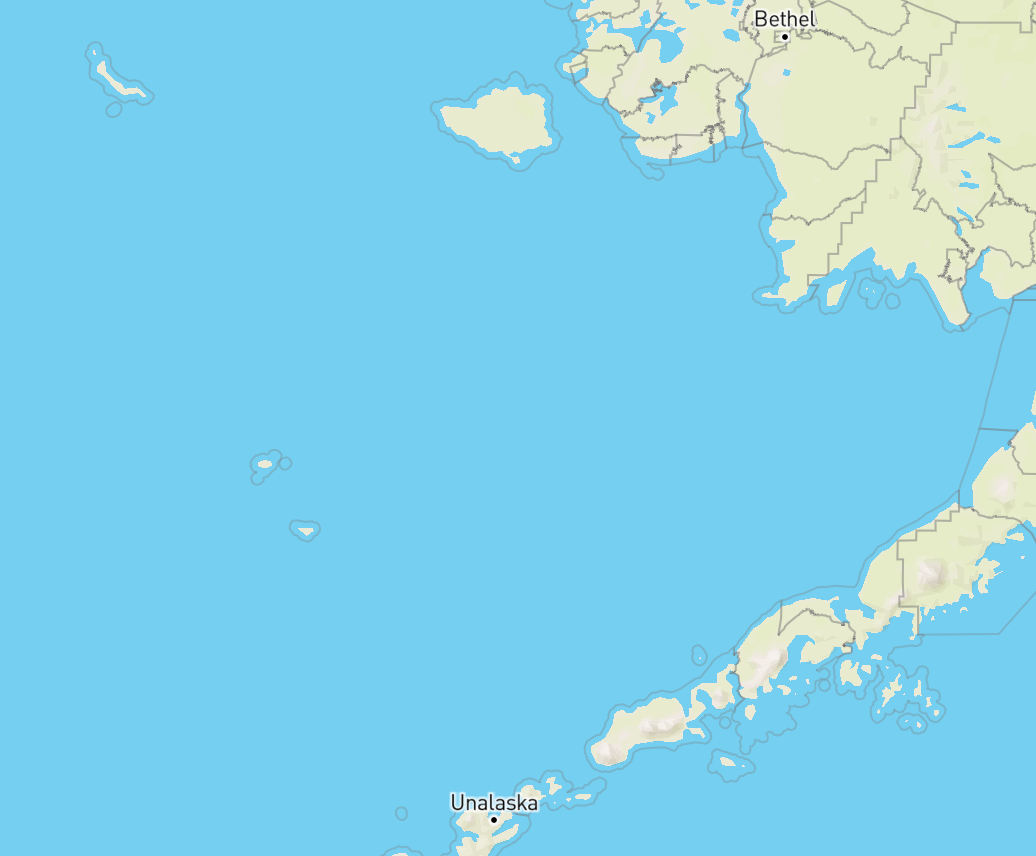}
\caption{The Cook Inlet is a body of water
stretching up from the Gulf of Alaska;
its Knik Arm and Turnagain Arm  surround
the densest part of Anchorage, separating it from
rural precincts to the north and south.
Following precinct adjacencies provided by the 
state would allow districts to jump across the water,
while a more restrictive notion of adjacency
would not.  On the right, we see that the precinct
shapefile gives no guidance on how the islands
are allowed to be connected to the mainland by 
districts.\label{fig:water}}
\end{figure}

To construct our more restricted 
notion of adjacency, we consulted
the Census Bureau Cartographic 
Boundary shapefile, which is {\em clipped to land}, i.e.,
excludes water from its geographic units.
With this as a guide, we removed certain connections across
water (see Figure~\ref{fig:water}).
This reduces the number of edges
modestly, from 1151 to 1109 for the precinct dual graph
and from 100 to 92 for the House dual graph. 

Finally, we create the tightest version of the graph by using the current House map as a guide, keeping the fewest water adjacencies that would allow the 
current districts to be considered valid. This gives us a tight dual graph with 1105 precinct edges and 89 House edges. As we will see below, these small changes to the underlying dual graph can have large consequences for the number of possible matchings. Figure \ref{fig:AKDG} shows the resulting graphs, which we use in the remainder of the analysis.

\begin{figure}[!h]
    \centering
\begin{tikzpicture}    [scale=.7]
\node at (-11,0) {\includegraphics[height=1.1in]{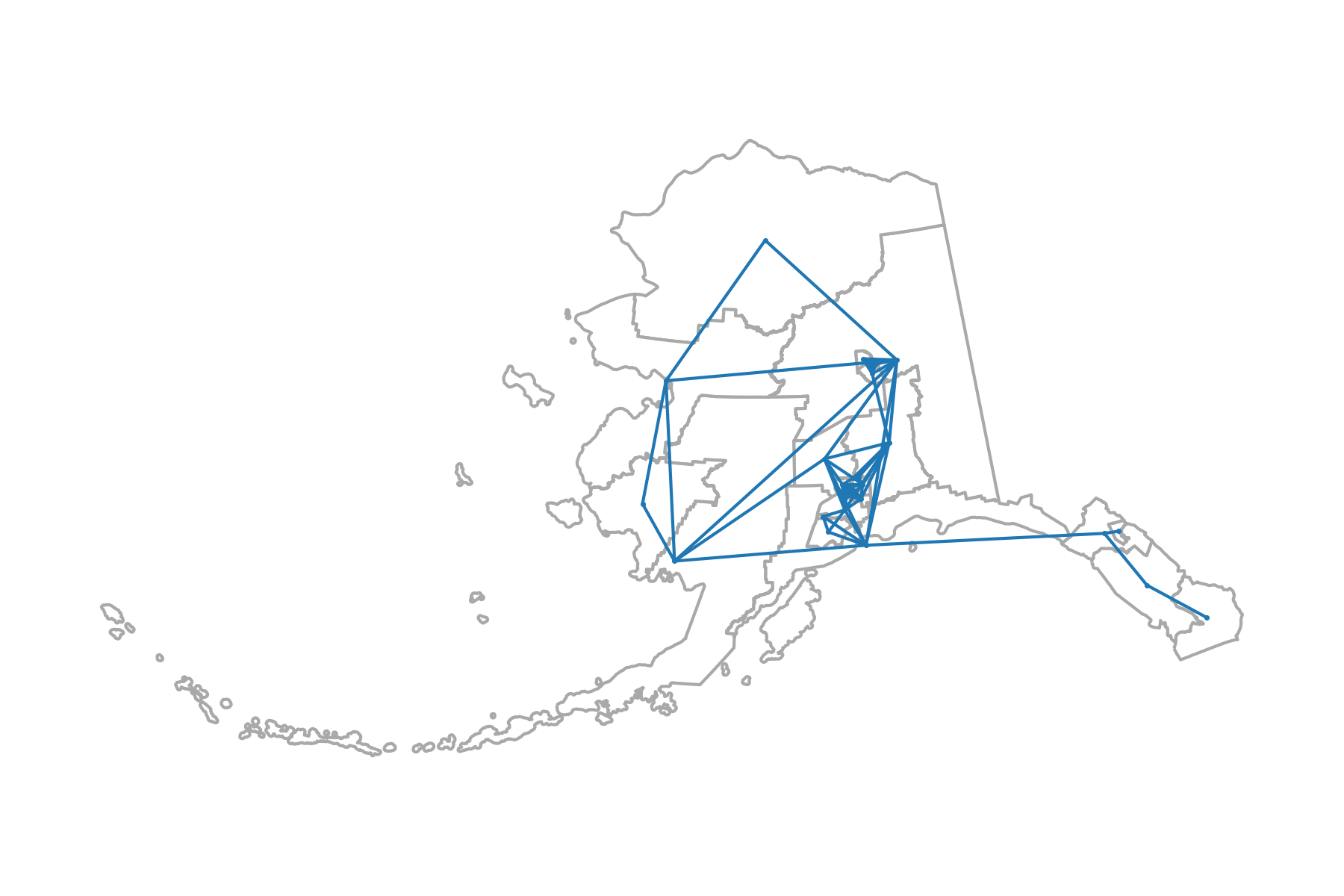}};
\node at (-11,-2.3) {\small Alaska House};
\node at (-11,-3) {\small Districts};
\node at (-5.5,0) {\includegraphics[height=1.1in]{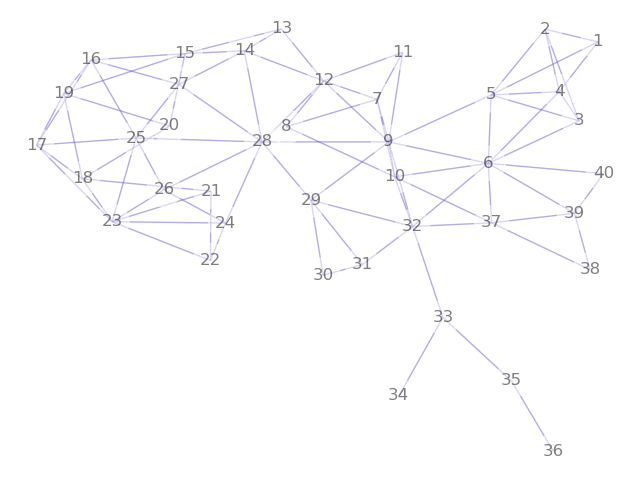}};
\node at (-5.5,-2.3) {\small Tight};
\node at (-5.5,-3) {\small (89 edges)};
\node at (0,-2.3) {\small Restricted};
\node at (0,-3) {\small (92 edges)};
\node at (5.5,-2.3) {\small Permissive};
\node at (5.5,-3) {\small (100 edges)};
\node at (0,0) {\includegraphics[height=1.1in]{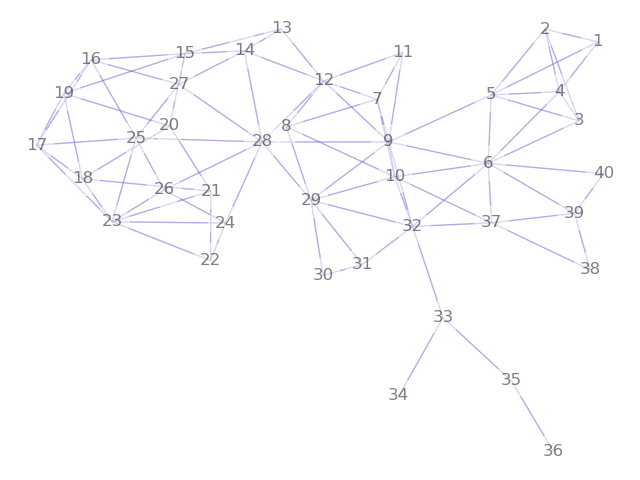}};
\node at (5.5,0) {\includegraphics[height=1.1in]{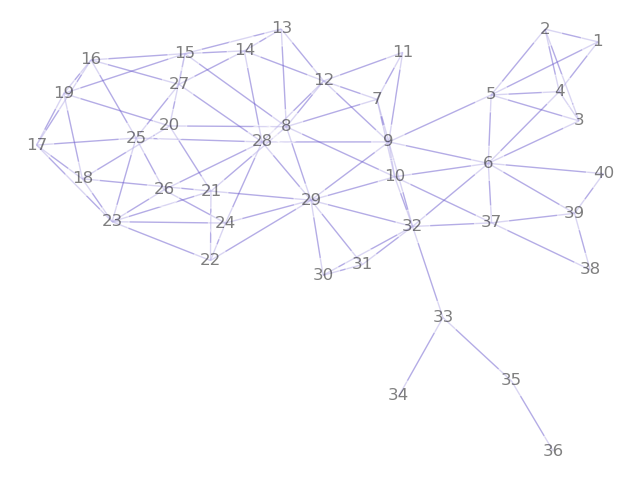}};
\end{tikzpicture}    
\caption{The three Alaska dual graphs.}
    \label{fig:AKDG}
\end{figure}





%
\subsection{Markov chains for generating 
 alternative House plans}

Markov chain Monte Carlo, or MCMC, is the leading method
in scientific computing for searching large spaces and studying properties of complex systems.  
Numerous research groups
now use MCMC implementations to study the universe of possible
districting plans, once the basic units have been set
\cite{mattingly,pegden,va-report}.
We use the open-source software package {\sf GerryChain},
created by the Voting Rights Data Institute \cite{gerrychain}.

This algorithm generates new plans iteratively, making modifications
to the districting assignments of some of the units at each step. To choose
how to modify districts at random, we use a proposal called Recombination ({\sf ReCom}),
 introduced in \cite{va-report}.\footnote{The scientific advantage of using Markov chains to sample
 districting plans is that they have a theoretical guarantee of producing representative samples
 (with respect to their stationary disributions) if run for long enough. In our case, we run the chains 
 until we obtain strong heuristic evidence of mixing, which is a common and effective standard in scientific
 computing.  See \cite{va-report}.}
At each step, this proposal merges the units of two adjacent districts and
then uses a balanced cut of a spanning tree to generate a new division of the merged districts.  
The user
can elect to impose hard constraints in the form of 
requirements for plans, or can choose a weighted random walk
that preferentially selects plans with properties deemed
to better comport with the districting principles.  
The tree method itself promotes the selection of compact districts, so the plans generated in this way 
tend to have comparable compactness statistics to human-approved plans and to comfortably pass the 
``eyeball test'' for district shape.

We ran our Markov chains 
on Alaska's
441 precincts as basic units, seeking new legally valid
plans for forming them into 40 House districts. 
As outlined above in \S\ref{AK-rules}, the law requires
that the districters aim to produce equipopulous, compact,
and contiguous districts.

The ideal population of a House district
is 710,231/40, or between 17,755 and 17,756 people. 
By federal law 
and common practice, legislative districts can deviate by up to 5\%
from ideal size without a special reason, so we have imposed
that limit on population balance, 
allowing 16,868--18,643 people per district.  The same level of population
balance was imposed on the Senate ensembles.

We generated  ensembles of 100,000 distinct House and Senate districting plans, varying the 
definitions of contiguity (tight/restricted/permissive).  
A district-level dual graph of the sampled plan was pulled every step.
Using FKT, we counted the number of matchings for each districting plan and edges between House districts, 
and we stored plans with extreme matching statistics.
The goal was to learn whether the new plans could have 
markedly different partisan outcomes, either on their own or when matched to 
form a Senate plan, from the current districts.

Replication code for producing and analyzing these ensembles is available on GitHub \cite{Alaska-Repo}.


\section{Alternative matchings in current House plan}\label{sec:altmatch}

In this section, we evaluate the currently enacted House plan by generating all of the potential matchings and computing their partisan behavior under our chosen election data. We start by computing the number of matchings for each of our notions of adjacency, reported in Table~\ref{tab:AK_timing}, using the FKT algorithm described in Appendix B. 
This already  highlights the fact that toggling  a small number of edges in the Alaska House dual graph (due only to reasonable interpretations
of water adjacency) can change the number of perfect matchings very substantially; in this case, the number of matchings jumps by nearly a factor of eight.\footnote{It is worth emphasizing that 
the number of matchings is sensitively dependent on the precise edge structure as well as simply the number of edges.  This fact is explored
below in \S\ref{sec:rematching}.}

\begin{table}[ht]
    \centering
    \begin{tabular}{|r|c|c|c|}
    \hline
  Alaska  & Tight&Restricted&Permissive  \\
  \hline
  \hline
  Dual edges & 89& 92&100\\
  \hline
  Matchings& 14,446&29,289&108,765\\
  \hline
  FKT runtime     &0.022  sec&0.022 sec&0.027 sec\\
  \hline
  Prune--and--choose runtime  & 14.2 sec& 28.5  sec&105.1 sec\\
  \hline
    \end{tabular}
\caption{On Alaska, FKT (which only counts matchings) runs in a fraction of a second; the prune-and-choose algorithm (which stores matchings) runs in under two minutes.\label{tab:AK_timing}}
\end{table}

We then use the prune-and-choose algorithm described in Appendix C to generate each of the possible pairings. 
For each matching, we evaluate its performance under each of the four  election 
treatments,
comparing the number of Senate districts
with a Democratic majority in the actual plan to the average number over the Senate plans formed by all possible matchings.
It bears emphasizing that the  
Democratic seats reported across the table refers to the number of 
Senate districts in which 
Galvin votes outnumbered Young votes 
for Congress, or Begich votes outnumbered
Dunleavy votes for Governor.

\begin{table}[!ht]
    \centering
    \begin{tabular}{|c|c|c|c|c|}
    \hline
&\multicolumn{4}{c|}{D Senate districts}\\
    &  Enacted & Tight&  Restricted &  Permissive  \\
     \hline\hline
    Cong18-N&6&6.90&6.78&6.52 \\
    \hline
    Cong18-A&7&8.30&8.18&7.93 \\
    \hline
    Gov18-N  &7&7.69   & 7.57&7.33 \\
    \hline
    Gov18-A&7&8.24&8.13& 7.88\\
    \hline
    \end{tabular}
    
    \vspace{.1in}
    
      \begin{tabular}{|c|c|c|c|c|}
    \hline
    &\multicolumn{4}{c|}{Competitive districts}\\

    &  Enacted & Tight& Restricted &  Permissive \\
        \hline\hline
     Cong18-N   &13&12.12&11.90 &11.91  \\
     \hline
        Cong18-A&12&12.33&12.26 & 12.54\\
        \hline
        Gov18-N& 11&10.54&10.55&10.21 \\
        \hline
        Gov18-A& 10&9.78 & 9.68 & 9.68\\
    \hline
    \end{tabular}
    
    \caption{Partisan outcome and competitiveness
    in the current (Enacted) Senate plan compared to the average over alternative matchings. 
    A competitive district is     defined here as one with
    a D share  between 40 and     60 percent.  Here, every partisan outcome is more favorable
    to Republicans than the neutral expectation.  Compare Table~\ref{tab:ensemble}, which
    varies the underlying House plan and shows the opposite partisan tendency.\label{tab:matchings}}
\end{table}

Comparing the enacted plan against all possible pairings  does indeed find
a small Republican tilt, falling
approximately one seat to the Republican side
of the typical matching but certainly does not appear to be a significant outlier.\footnote{The actual Senate composition also has six or seven Democrats, depending on how you count:
seven state Senators were elected as Democrats, but Sen. Lyman Hoffman caucuses with the Republican majority.}
(At the same time, we can observe the substantial effect of discarding absentee ballots:  it  shifts outcomes
towards Republicans by about 1 seat.)
The  histograms in Figure~\ref{fig:matchhist} add detail by showing the full distribution of 
Democratic seats with respect to each race.
\begin{figure}[!ht]
 \centering
 \begin{tikzpicture}[scale=.7] 
	 \node at (0,0)   {\includegraphics[width=2in]{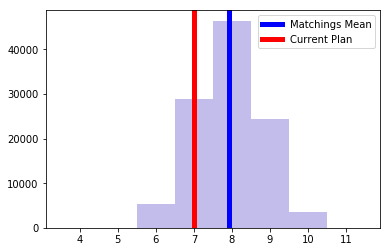}};
 	\node at (0,-3) {\small Cong18-A D Senate seats};
 	\node at (7.5,0)   {\includegraphics[width=2in]{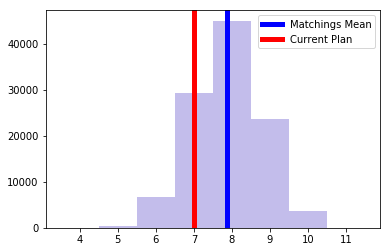}};
 	\node at (7.5,-3) {\small Gov18-A D Senate seats};
 	\end{tikzpicture}
 \caption{The number of Senate districts with a D majority, as the matching varies across the permissive set. The green line marks the average number of Democratic seats over all matchings and the red line shows the outcome in the current plan, showing a one-seat
 advantage for Republicans in the current matching and a four- to 
 five-seat swing overall.
 These two histograms look nearly identical for the different 
 elections, despite the substantial differences
 in how the vote was distributed.}
 \label{fig:matchhist}
\end{figure}

For even more granular detail, at the level of individual districts,
we can study  box-and-whisker plots (Figure~\ref{fig:boxplots}). 
In these images, the districts are ordered from 
lowest to highest Democratic vote share in order to make them 
comparable over the ensembles.
The boxes show the 25th-75th
percentile range and the whiskers
show every result achieved over
the full set of (permissive) matchings.
 Similar histograms and boxplots for the remainder of the elections and dual graphs are available in our supplemental material \cite{Alaska-Repo}.

\begin{figure}[!ht]
 \centering
 \begin{tikzpicture} 
	 \node at (0,6.3)   {\includegraphics[width=4.2in]{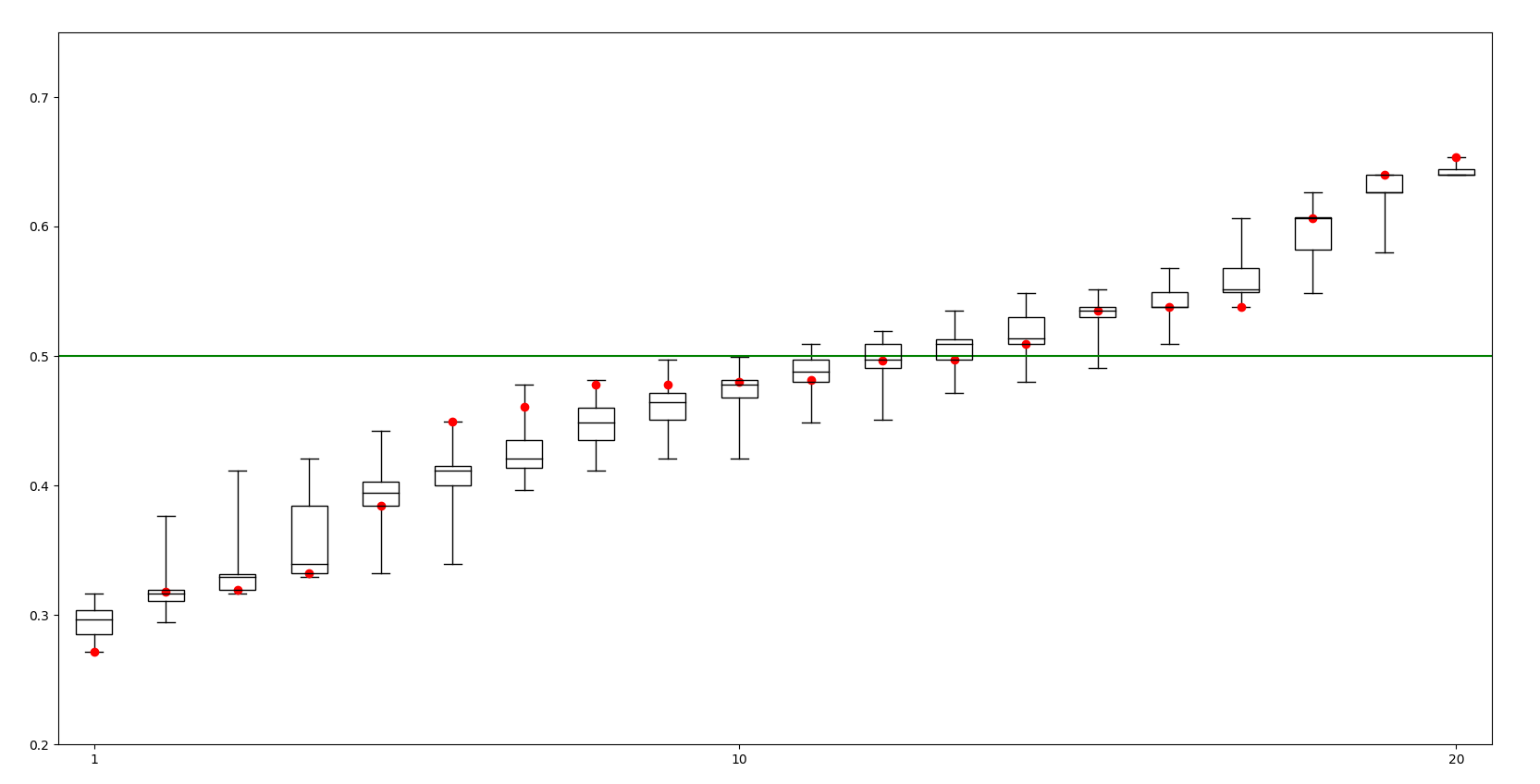}};
 	\node at (0,3.3) {\small Cong18-A Dem share by Senate district};	
	\draw [blue] (1.05,6.1) rectangle (1.53,7.1);
 	\node at (0,0)   {\includegraphics[width=4.2in]{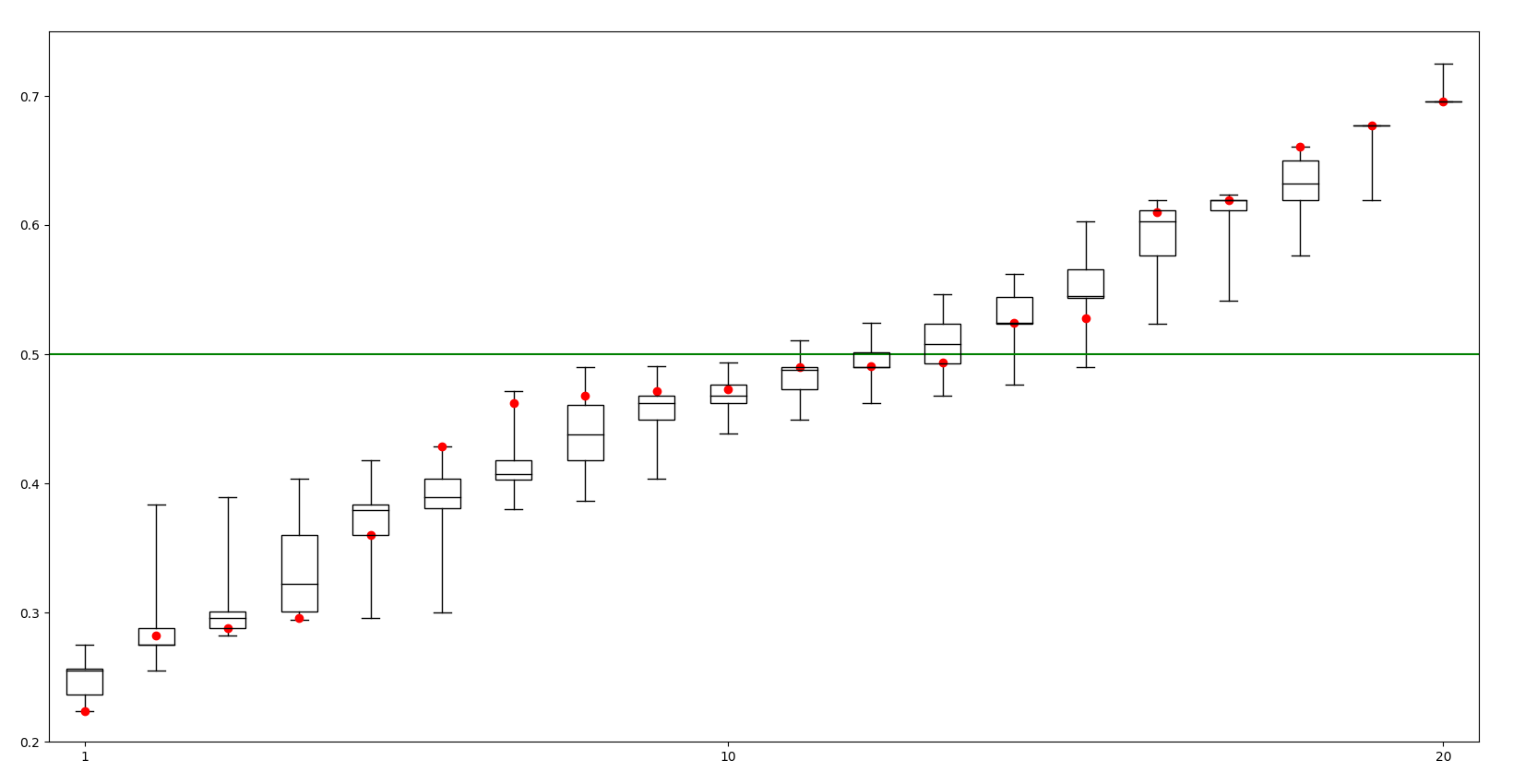}};
 	\node at (0,-3) {\small Gov18-A Dem share by Senate district};
		\draw [blue] (1.053,-.2) rectangle (1.555,.855);
 	\end{tikzpicture}
 \caption{Democratic vote share in current Senate districts (red dots), compared to range in comparable districts over the full set of matchings (box and whiskers). With district-by-district detail, the differences between the two elections' voting patterns are more visible.
For instance, the 13th-indexed districts in the state 
have a Galvin (Congressional) share and a Begich (Governor) share just under the median of the respective ensembles,
 while nearly 75\% of the ensemble in each case had a Democratic majority in the corresponding district.
Where boxes have degenerated to a single value, it is because some matchings  are forced, thinning the number of  possibilities.  
 \label{fig:boxplots}}
 \end{figure}

\newpage
\section{Alternative House and Senate districts}\label{sec:chain}

\subsection{Partisan outcomes}
Using the Markov chain ensembles of 100,000 plans each as a neutral counterfactual for drawing districts,
we first report the number of House districts out of 40 with more Democratic than Republican votes.

\begin{table}[!h]
    \centering
    \begin{tabular}{|c|c|c|c|c|}
    \hline
&\multicolumn{4}{c|}{D House districts}\\
    &  Enacted & Tight& Restricted &  Permissive \\
     \hline\hline
    Cong18-N&15&14.10&13.93&14.28\\
    \hline
    Cong18-A&18&16.35&16.33&16.73\\
    \hline
    Gov18-N&15&14.73&14.55&14.44\\
    \hline
    Gov18-A&20&17.86&17.63&17.65\\
    \hline
    \end{tabular}

\vspace{.1in}

    \begin{tabular}{|c|c|c|c|c|}
    \hline  
    &\multicolumn{4}{c|}{Competitive districts}\\
        &  Enacted &  Tight&Restricted &  Permissive  \\
              \hline      \hline
        Cong18-N  &  24&25.51&25.01&25.23\\
            \hline
        Cong18-A  &24&24.85&24.62&24.88\\
            \hline
    Gov18-N  &19&21.03&20.60&20.59\\
            \hline
   Gov18-A   &19&20.83&20.57&20.52\\
      \hline
\end{tabular}      
    \caption{The number of House districts
 with a D majority and the number
 of competitive districts, as the House plan  varies.
     A competitive district is 
    defined as one in which 
    the D share is between 40 and 
    60 percent.
Compare Table~\ref{tab:matchings}.}
    \label{tab:ensemble}
\end{table}

Beyond the averages, we can view
the full histograms to gauge the 
extent to which the current plan 
is an outlier.

 \begin{figure}[!h]
 \centering
 \begin{tikzpicture}[scale=.7]
 	\node at (0,0)   {\includegraphics[width=2in]{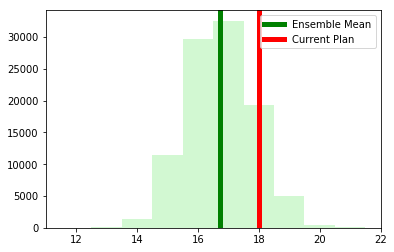}};
 	\node at (0,-3) {\small Cong18-A D House seats};
 	\node at (7.5,0)   { \includegraphics[width=2in]{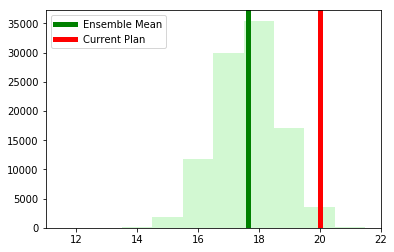}};
 	\node at (7.5,-3) {\small Gov18-A D House seats};
 	\end{tikzpicture}
 \caption{The number of  districts with a D majority in the indicated election, over the ensemble of (permissive) House 
 plans.}
 \end{figure}
 
 Recall that the current majority House caucus includes 24 members: 16 members who were elected as Democrats, seven members who were elected as Republicans, and one who was elected as an Independent.

 \begin{figure}[!h]
 \centering
 \begin{tikzpicture}[scale=.7]
 	\node at (0,0)   {\includegraphics[width=2in]{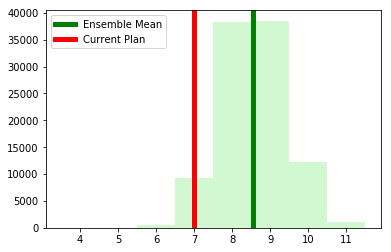}};
 	\node at (0,-3) {\small Cong18-A D Senate seats};
 	\node at (7.5,0)   { \includegraphics[width=2in]{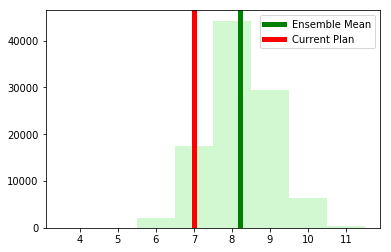}};
 	\node at (7.5,-3) {\small Gov18-A D Senate seats};
 	\end{tikzpicture}
 \caption{The number of  districts with a D majority in the indicated election, over the ensemble of (permissive) Senate 
 plans.}
 \end{figure}

The Senate ensemble gives another interesting vantage point.  With respect to both vote patterns,
the bulk of plans assembled by the Markov chain process have 7-10
Democratic Senate seats, and the full range observed in the ensemble is 6-11.  
Compare this to simply matching the current House plan, where we can fully exhaust the possibilities
instead of sampling.  Matchings give us Democratic seat outcomes of 6-10 in either vote pattern,
with a small number that achieve a 5-seat outcome against the Governor returns.  
This means that mere control over the matchings gives essentially just as much partisan latitude
as the right to draw plans from scratch with the most permissive notion of precinct adjacency.  

\subsection{Native population}\label{sec:native}

We also find that the 
number of districts with an Alaska Native population
majority is typically 3-4 in our randomly produced House 
plans, compared to three in the current House plan.
Furthermore, the ability to form districts more 
permissively across water makes a very noticeable difference, boosting 
the likelihood of forming a fourth
majority-Native district by random 
selection.

\begin{table}[!h]
\centering
\begin{tabular}{rccc}
 Number of majority-Native House districts &2&3&4  \\
\hline
    Permissive ensemble & 444&53,596&45,960\\
    Restricted ensemble & 1,053&97,069&1,878\\
    Tight ensemble&1,135& 97,507&1,358
\end{tabular}
\end{table}


\subsection{Re-matching the new plans}
\label{sec:rematching}

Over each ensemble of  100,000 House plans, we 
computed the number of dual edges 
and the number of perfect matchings.  
The (nonzero) numbers of matchings varied from 74--165,344 (tight),
42--194,588 (restricted), and 852--961,176 (permissive).  
To a first approximation, more edges means more
matchings, but the scatterplots in Figure \ref{fig:scatter}
show that there is also substantial dependence on the specific placement of the edges.\footnote{One notable feature of Fig
\ref{fig:scatter} is the prevalence of plans with low numbers of matchings.  It is easy to construct graphs  with any number of edges and zero matchings, simply by 
having any two leaves (vertices of degree one) connected to a single common neighbor---and this can easily occur by chance.
There were no 
matchings at all in 2319, 4274, and 3504 of the dual graphs found by the ensembles, respectively.
It is similarly easy to randomly construct graphs 
with very few matchings simply by having many leaves and thus
many forced edges.  On the other hand,
there are graphs with $2n$ vertices,
roughly $n^2/2$ edges, and only a single matching:  
start with a complete graph on $n$ vertices (i.e., with all edges
present) and add a single leaf connected to each of those.
Each leaf vertex is forced to match to its unique neighbor,
leaving no more vertices to pair.}

\begin{figure}[ht]
\centering
\begin{tikzpicture}
\node at (0,4.2) {\includegraphics[width=1.9in]{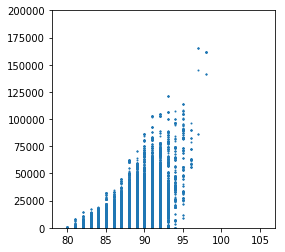}}; 
\node at (0,0) {\includegraphics[width=1.9in]{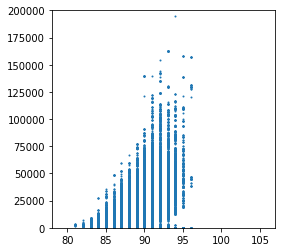}};
\node at (6,2.25) {\includegraphics[width=1.9in]{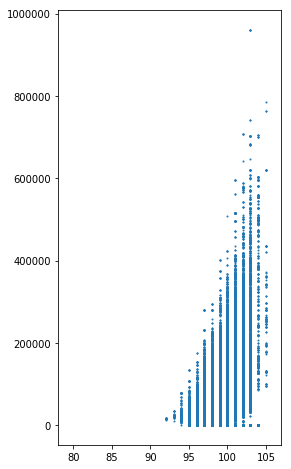}};
\end{tikzpicture}
\caption{The relationship between the number of edges
($x$ axis) and the number of matchings ($y$ axis).
Each point is a House plan, varying
over the tight (top left), restricted (lower left), and permissive (right) ensembles.\label{fig:scatter}}
\end{figure}

To illustrate the sensitive dependence of the number of matchings on the precise topology of the graph, we focus on three examples found in the permissive House ensembles in Figure \ref{fig:extremes}. The leftmost dual graph has 103 edges and 961,176 matchings while the next one has more edges but less than half the number of matchings. The disparity in these matching counts is almost entirely due to the fact that there are two ways to pair the districts in the ``panhandle'' of the 103-edge 
graph---[(29,30), (39,40)] or [(29,39), (30,40)]---compared to the unique pairing 
 [(17,40), (35,36)] in the corresponding region of the 105-edge graph. 
This accounts for a doubling in the number 
of overall matchings, assuming a comparable
number of ways to match the remaining 36 vertices.
Forced pairings  play an important role in plans with few matchings. The rightmost graph shows an example with 101 edges but only 852 matchings, the lowest nonzero
number ever observed. This is due to the many forced pairs---[(11,13), (12,34), (2,3), (1,36), (4,33), (8,24), (7,26), (17,25)]---which limits the flexibility in pairing the remaining vertices. 

\begin{figure}[!h]
    \centering
\begin{tikzpicture}[scale=.7]
\node at (-6,0) {\includegraphics[width=1.5in]{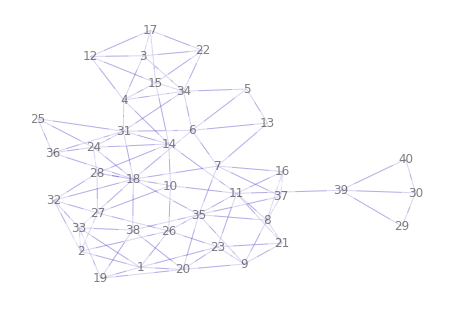}};
\node at (-6,-2) {\small 103 edges};
\node at (-6,-2.5) {\small 961,176 matchings (most)};
\node at (0,0) {\includegraphics[width=1.5in]{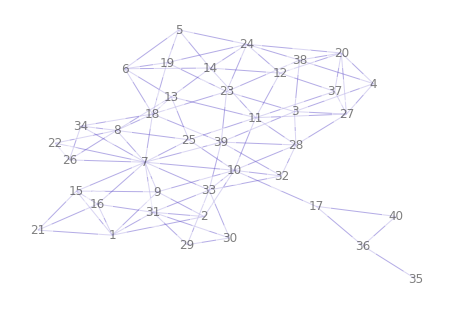}};
\node at (0,-2) {\small 105 edges (most)};
\node at (0,-2.5) {\small 436,220 matchings};
\node at (6,0) {\includegraphics[width=1.5in]{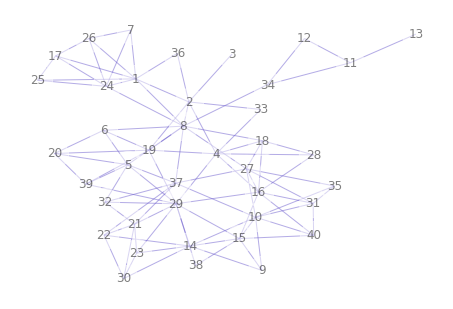}};
\node at (6,-2) {\small 101 edges};
\node at (6,-2.5) {\small 852 matchings (fewest)};
\node at (-3.5,-6) {\includegraphics[width=2.5in]{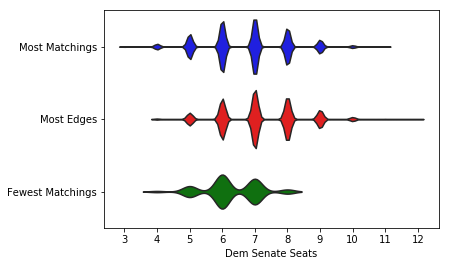}};
\filldraw [blue!70!black] (4,-5.8) circle (9.88*.15); 
\filldraw [red!85!black]  (4,-5.8) circle (6.6*.15);
\filldraw [green!50!black] (4,-5.8) circle (.29*.15) ;
\node at (4, -7.9) {\scriptsize Violin plot scale};
\end{tikzpicture}    
\caption{A selection of three House plans from our ensembles whose dual graphs have
various extremal properties.  The violin plot shows the number of Democratic districts with respect to Gov18-A vote
data, and the colored regions show the relative sizes of the matching sets.  \label{fig:extremes}}
\end{figure}

The analysis demonstrates that the matchability of the underlying House plan can have a significant downstream partisan impact on 
the Senate plans that can be formed. Figure \ref{fig:extremes} shows examples of this behavior by comparing the distributions over the possible perfect matchings for three plans from the permissive ensemble.  
For each of the three House plans, a typical perfect matching has 5-8 Democratic districts out of 20.  However, by choosing 
a House plan with more district adjacencies or more matchings, it is possible to get as few as 3 or as many as 12. Investigating the full flexibility allowed to a mapmaker who controls the House district drawing process is an interesting question for further research.

\section{Conclusions}

Numerous studies have sought to quantify the partisan
advantage secured by the selection of a particular
districting plan.
In that vein, we find that the current Alaska House plan favors Democrats by an estimated 1-2 seats out of 40 when compared to other (contiguous, compact, population-balanced) ways of forming  districts from precincts.  
The core of the paper, however, is a novel application
of rigorous mathematics to redistricting in the case
of a nesting rule for state legislatures.  
For that, we can apply the theory of perfect matchings
of graphs, learning that the Alaska Senate plan secures a Republican advantage of 1 seat out of 20 when compared to other ways to match the House districts.  Other findings:
\begin{itemize}
\item The choice of matching of a fixed House plan gives as much latitude to control partisan outcomes as 
drawing a Senate plan from scratch: approximately a five-seat swing out of 20.
\item The significant number of absentee/early/provisional ballots in Alaska 
skew markedly Democratic.  Different choices of how to assign
them to precincts will impact findings about the consequences of 
moving district boundaries, and should be further studied. However, this has no effect on our analysis of matchings.
\item Well-chosen statewide races, in this case the 2018 Governor and Congressional elections, gave partisan
measurements that are closely compatible with each other and qualitatively concordant with the Legislative outcomes.
\item Contiguity rules are not completely straightforward, and can have a major role in shaping the space of 
districting possibilities.  For instance, permissive water adjacency makes nearly half of neutrally generated House plans have
a fourth majority-Native district, while less than 2\% of plans do with more restricted adjacency (\S\ref{sec:native}).
\end{itemize}


\newpage
\appendix

\begin{center}
    {\LARGE\bf Supplement to Mathematics of Nested Districts: The Case of Alaska}
\end{center}

This supplement contains additional plots and technical descriptions of the algorithms used in the paper for enumerating and sampling perfect matchings. Appendix A shows reference figures of the state House dual graphs for the non-Alaska states with strict nesting rules. Further information about the FKT algorithm is given in Appendix B while Appendix C introduces the new Prune--and--Choose algorithm and provides a proof of correctness. In Appendix D, these algorithms are applied to all states with a nesting rule to compare the relative scales of the enumeration problem. Finally, in Appendix E we implement and validate a uniform sampling procedure for matchings that can be applied even in the states where generating all of the matchings would be computationally infeasible.

\newpage

\section{Dual Graphs of States with Nesting Rules}
In this appendix we show the dual graphs for the other  states that require two-to-one nesting. Corresponding plots for Alaska are shown in Figure 7 of the main text. The left-hand column
    shows the House districts, with the dual graph overlaid, and the right-hand column shows a nearly-planar embedding with accurate district labels.
Given these House districts, valid Senate plans in these states correspond to perfect matchings of these graphs.

\begin{figure}[!h]
    \centering
    {\bf Illinois}\\
    \includegraphics[height=2in]{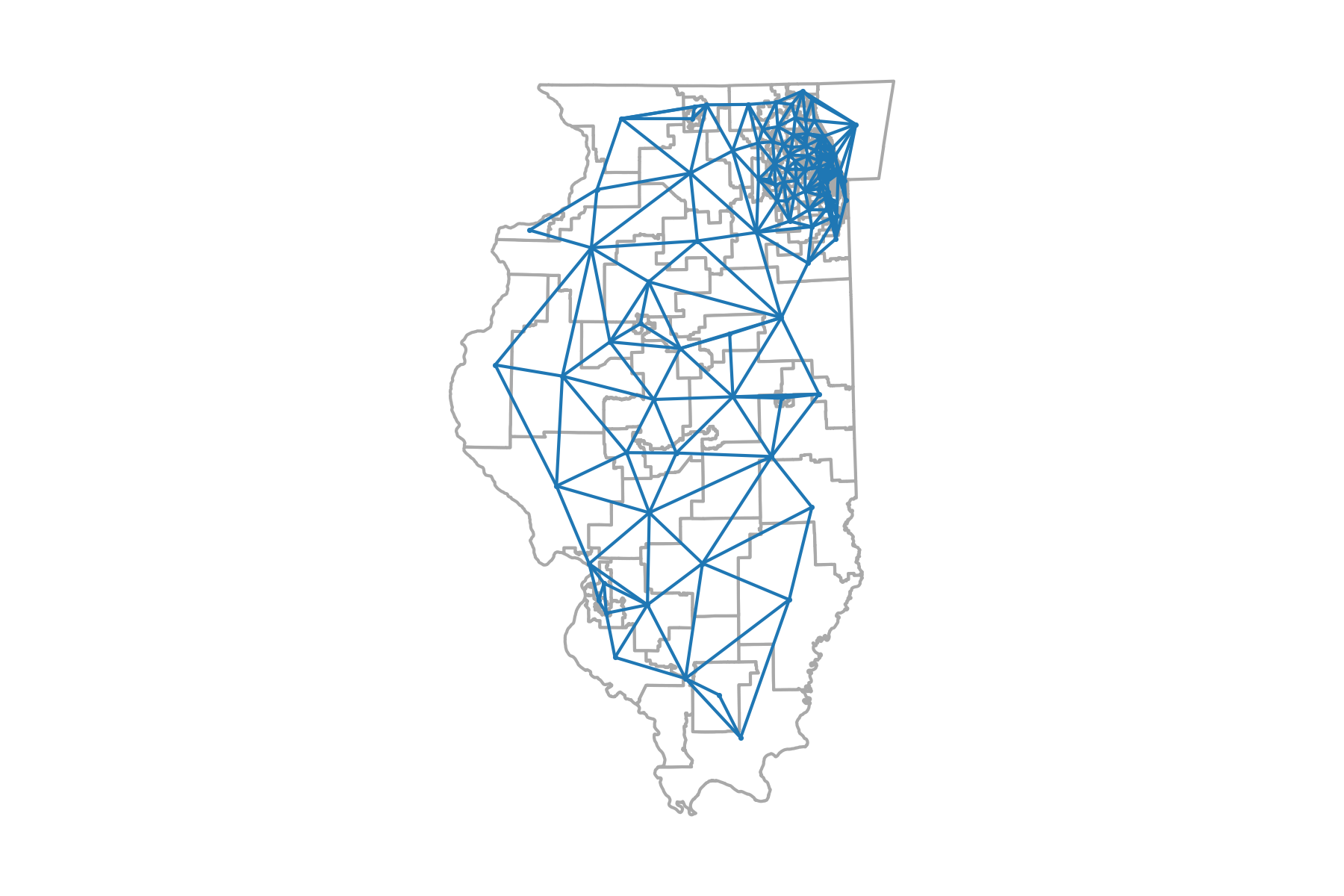}\quad \includegraphics[height=2in]{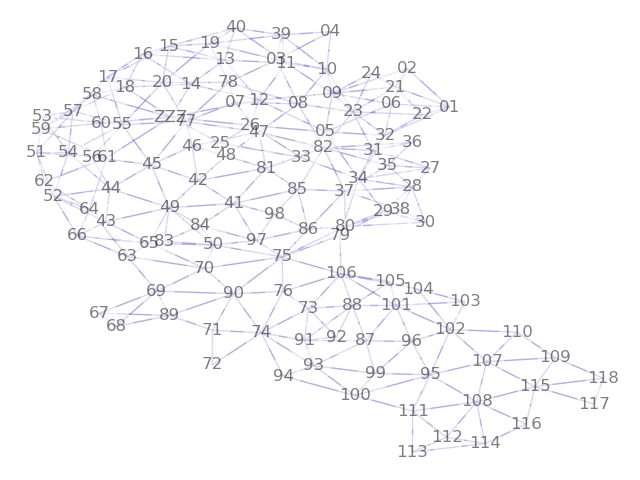}\\
    {\bf Iowa}\\
    \includegraphics[height=2in]{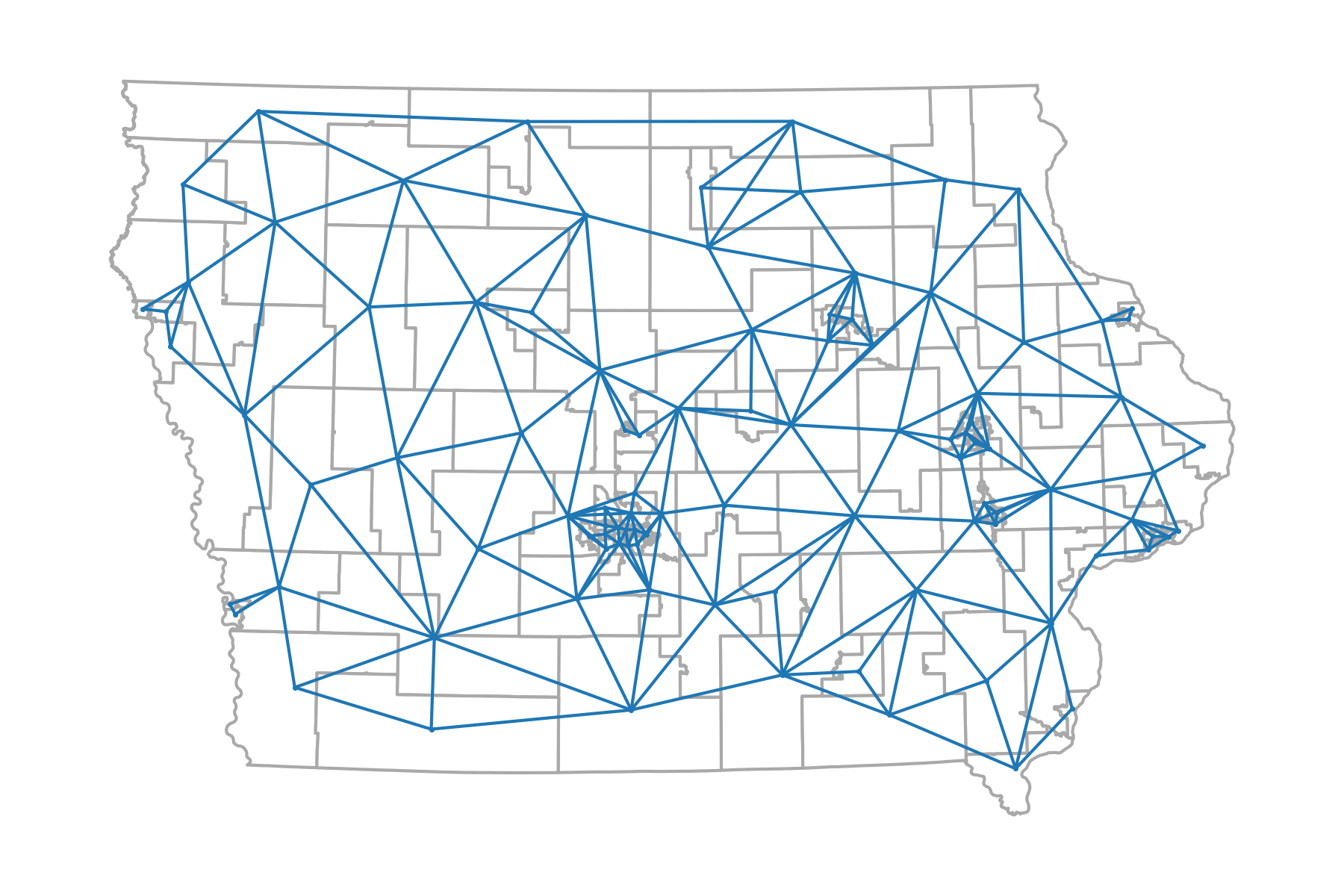}\quad \includegraphics[height=2in]{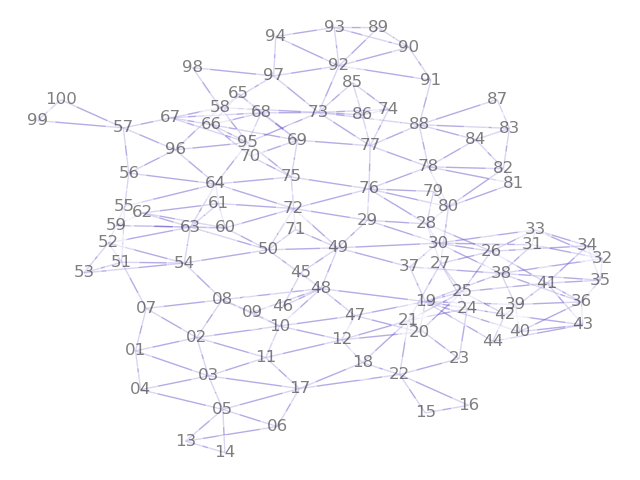}\\
    {\bf Minnesota}\\
    \includegraphics[height=2in]{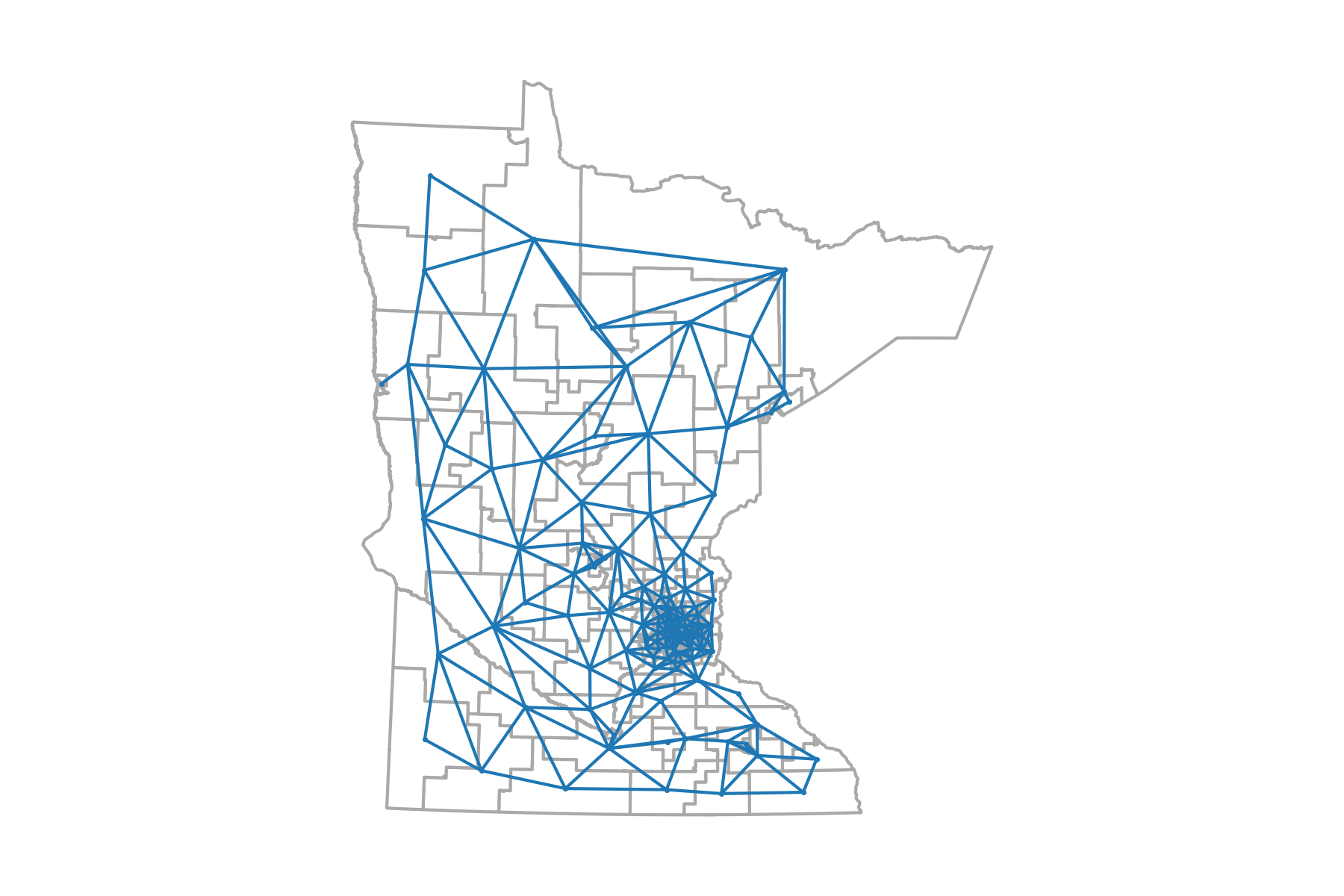}\quad \includegraphics[height=2in]{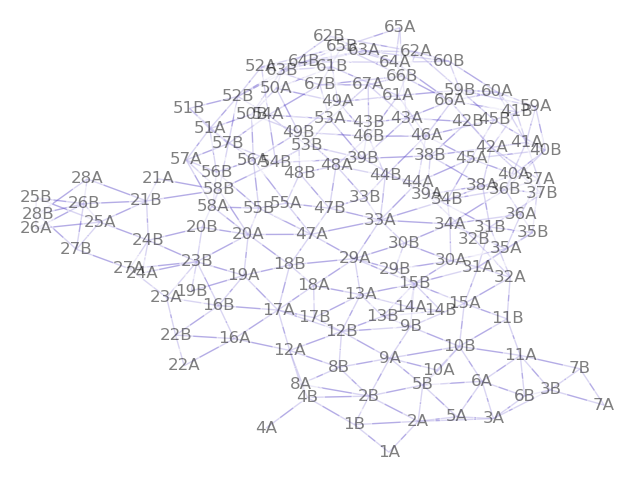}\\

\end{figure}

\newpage

\begin{figure}[H]
    \centering
    {\bf Montana}\\
    \includegraphics[height=2in]{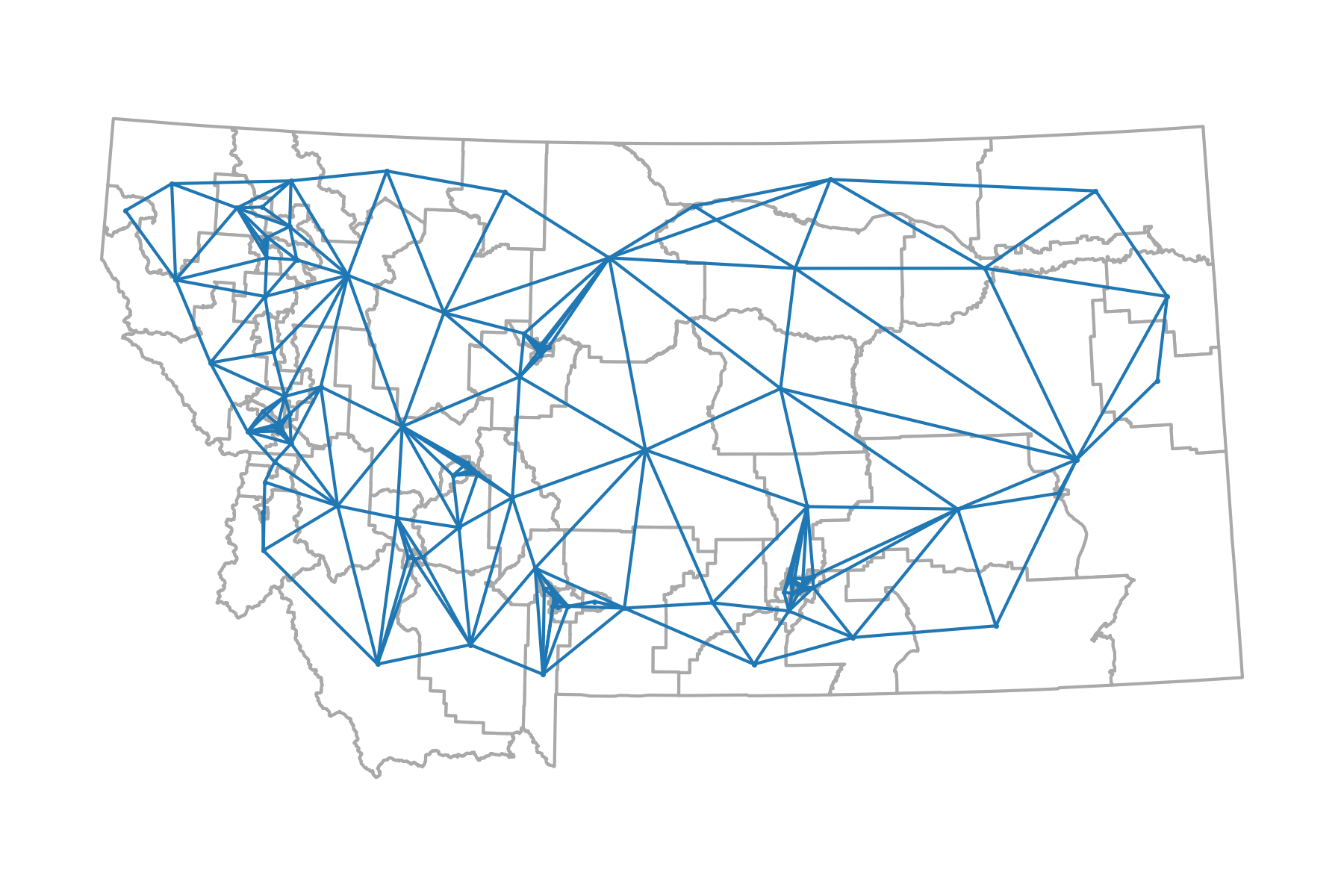}\quad \includegraphics[height=2in]{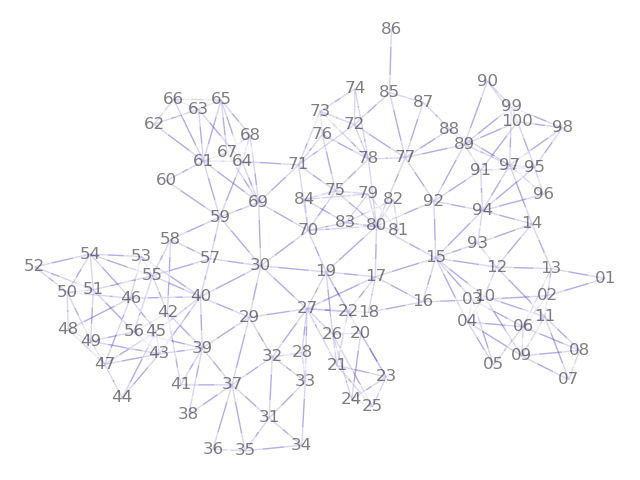}\\
    {\bf Nevada}\\
    \includegraphics[height=2in]{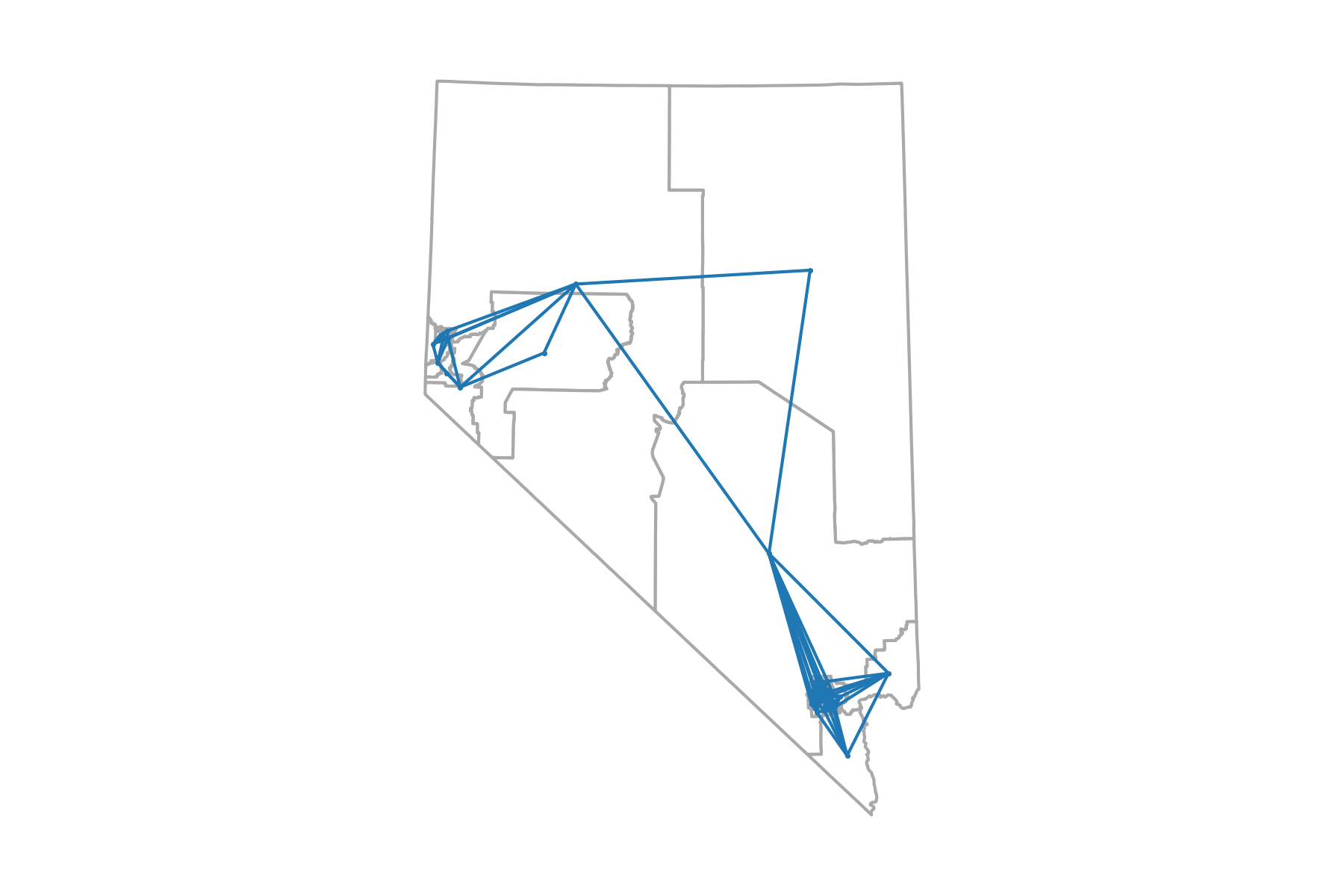}\quad \includegraphics[height=2in]{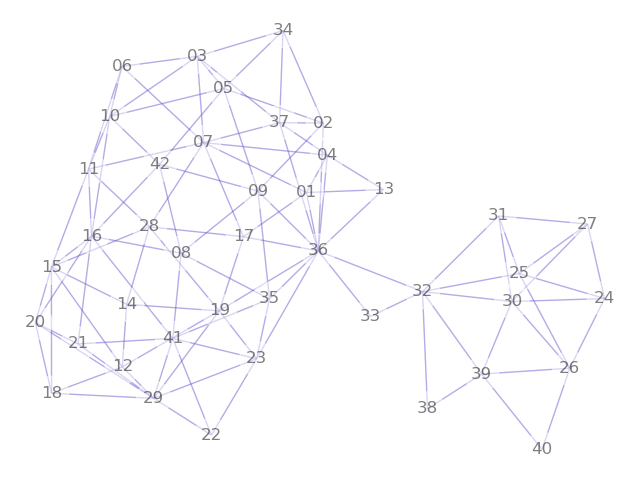}\\
{\bf Oregon}\\
    \includegraphics[height=2in]{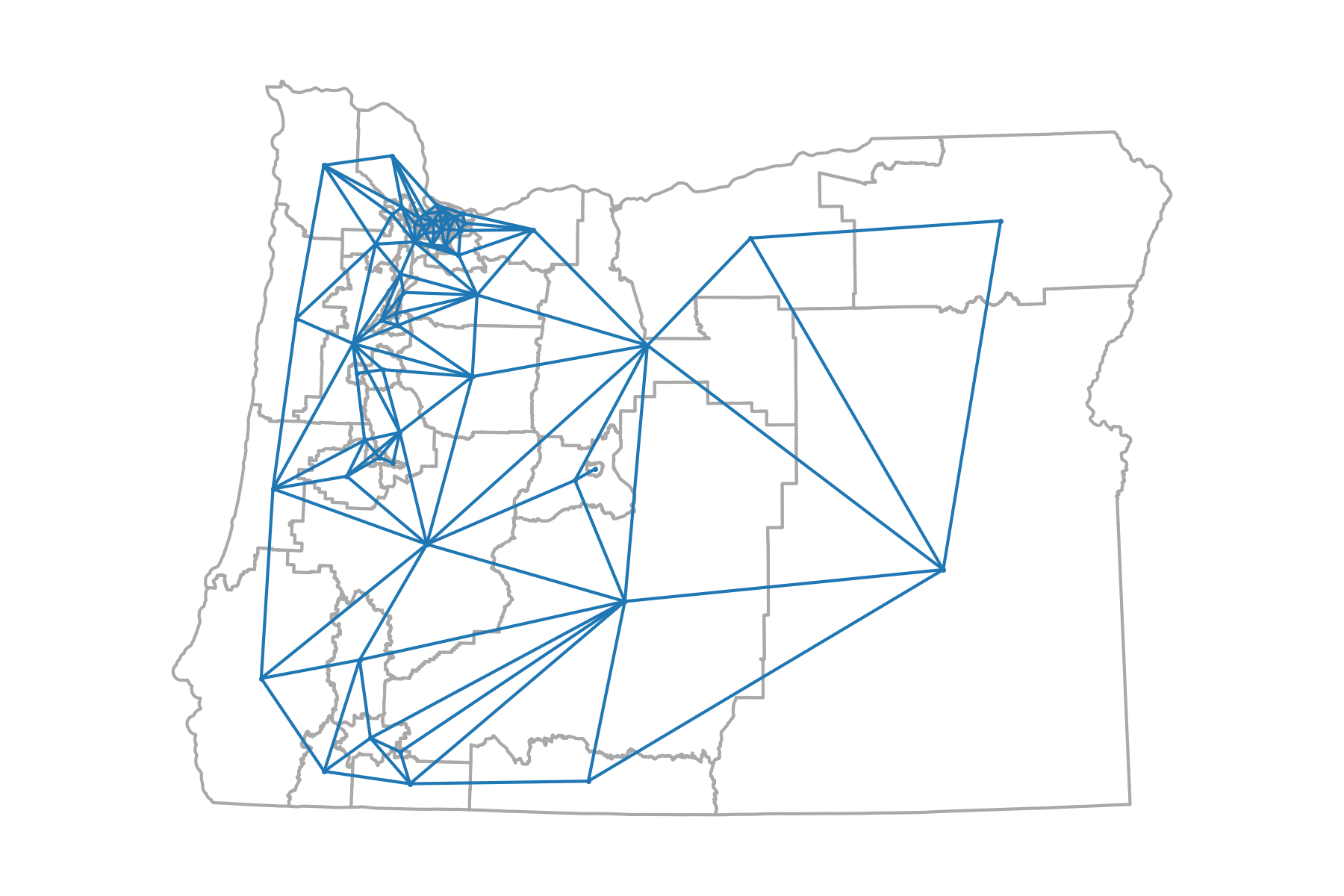}\quad \includegraphics[height=2in]{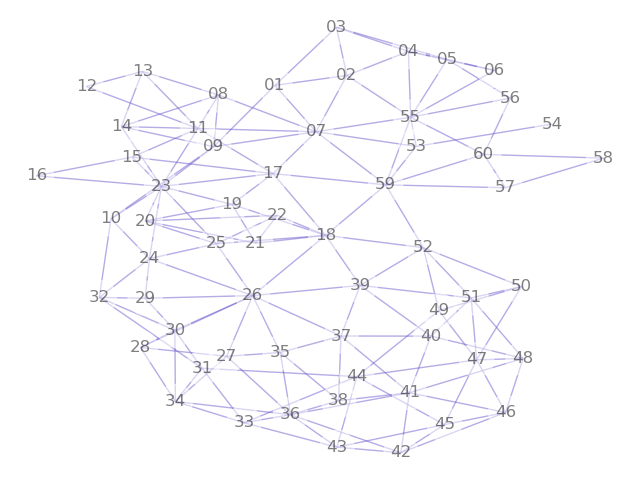}\\
    {\bf Wyoming}\\
    \includegraphics[height=2in]{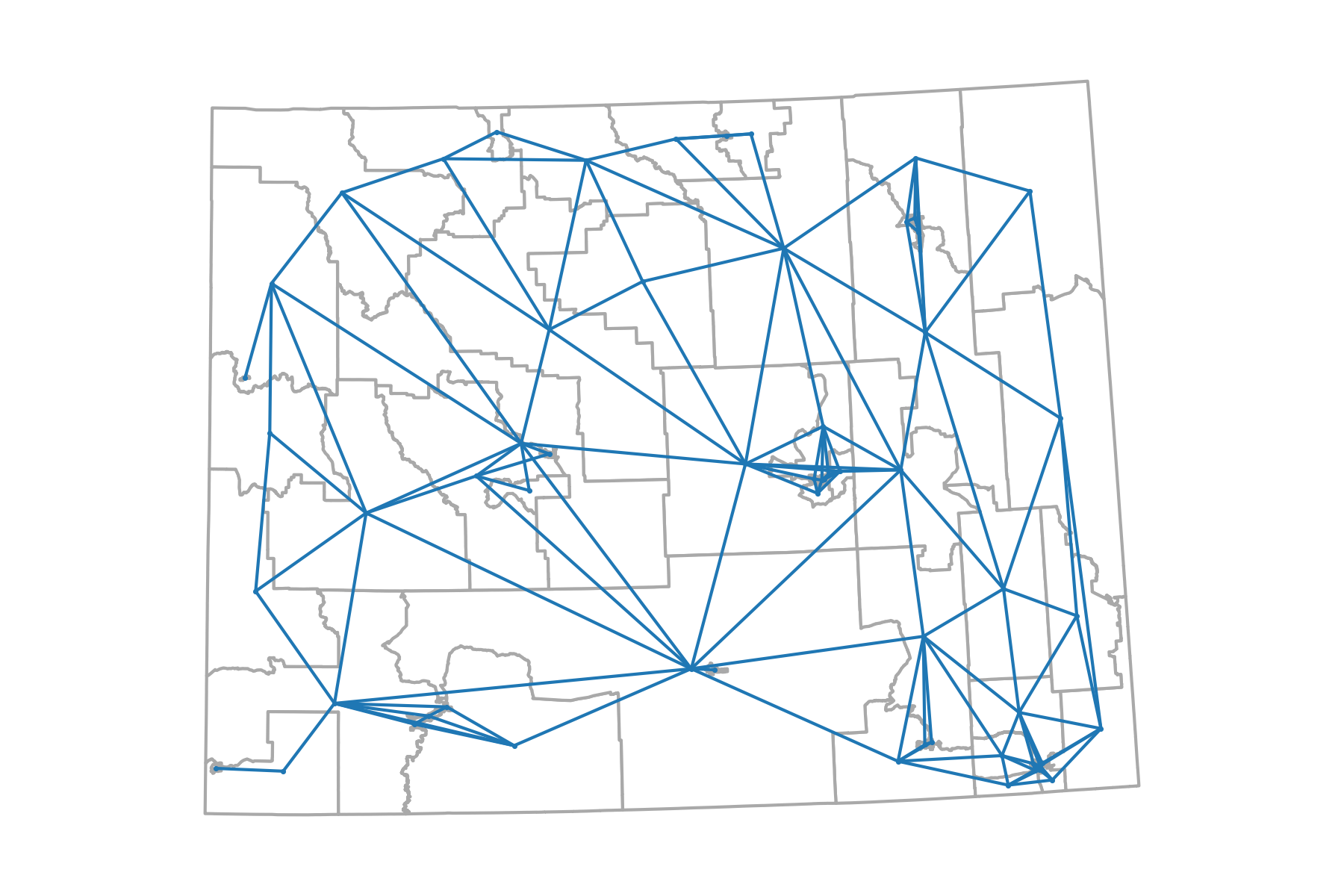}\quad \includegraphics[height=2in]{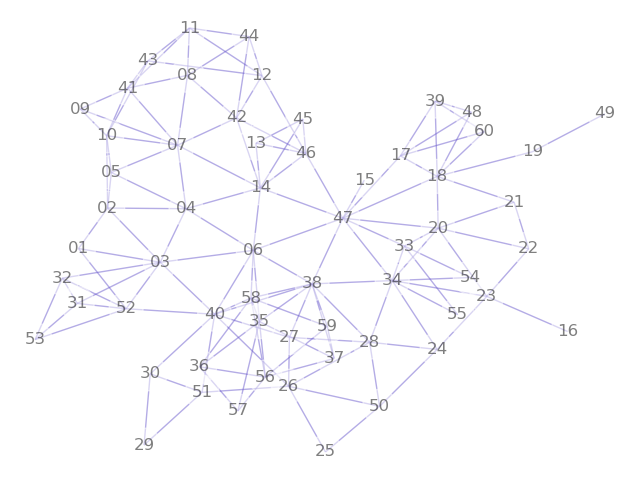}\\

\end{figure}



\newpage

\newpage

\section{FKT algorithm for enumerating perfect matchings}
\label{sec:FKT}

As described in Section 1.2 of the main text, Fisher, Kasteleyn, and Temperley designed a method now known as the FKT algorithm for enumerating
the perfect matchings of a planar graph.  This allows us to quickly determine the number of perfect matchings in a dual graph, allowing us to evaluate whether it is computationally tractable to explicitly list all matchings.
FKT  takes a planar embedding of a graph $G$ as input, then assigns signs to the edges in what is called a {\em Pfaffian
orientation} (every face should have an odd number of counterclockwise edges) to create 
a signed, skew-symmetric adjacency matrix $A$. Then $\sqrt{\det A}$ counts the perfect matchings.\footnote{The Pfaffian is a general matrix
operation that  agrees with  $\sqrt{\det A}$  for skew-symmetric matrices.  See
\cite{Loehr} for more mathematical details.}  

This algorithm runs in fractions of a second on each graph, which is fast enough to incorporate at each 
step of a Markov chain. Our implementation is freely available at \cite{Alaska-Repo}, and 
timing details are provided in \S\ref{sec:Enum}.

\section{Prune-and-choose algorithm for constructing perfect matchings}
In order to evaluate the partisan properties of the pairings of House districts, it is not sufficient to count matchings; we also need to generate and examine the full list of matchings. 
In this section, we describe a simple method to create a list of all possible perfect matchings of a graph. This is a recursive method that simplifies the search by looking for forced pairs.

The first step is to prune the graph. This means finding all {\em leaves} of the graph (nodes of degree one, i.e., House districts that are only connected to one other district) and matching each with its only neighbor. 
We call these matches \emph{forced pairs}. One round of pruning may create new nodes of degree one in the resulting graph, and so we iteratively prune forced pairs until there are no nodes of degree one left. 

The second step is a simple  check to rule out a parity obstruction to the existence of a matching.
If any connected component has an odd number of nodes, then it cannot be perfectly matched, 
so the whole graph also fails to have a perfect matching. If all connected components have even numbers of nodes, then we proceed.

Next is a choice step. From the remaining graph, we choose a node of minimum degree,  then consider pairing it with each of its neighbors. For each of those pairings, we remove both nodes from the graph and apply our algorithm to what remains. 
We prune, check, choose, and iterate, in sequence, until the process terminates at a connected graph of two nodes, 
producing  a perfect matching of the original graph. We provide a proof of correctness in Appendix \ref{appendix:validity} and an example run of the algorithm on a small graph below.

\newpage
\subsection{Prune-and-Choose Example}
\label{pruning-alg}

A run of our algorithm on a sample graph with ten nodes proceeds as follows.

\centerline{\begin{tabular}{m{2in}m{2.75in}}
Initiate. & 
\begin{tikzpicture}[scale=.6]
    \node[circle,draw,fill,inner sep=1.5pt, label = right:$J$] (J) at (2.5, 0.5) {};
    \node[circle,draw,fill,inner sep=1.5pt, label = left:$I$] (I) at (2, 0) {};
    \node[circle,draw,fill,inner sep=1.5pt, label = right:$H$] (H) at (0, 0) {};
    \node[circle,draw,fill,inner sep=1.5pt, label = left:$G$] (G) at (-1, 0) {};
    \node[circle,draw,fill,inner sep=1.5pt, label = 30:$F$] (F) at (1.5, 1) {};
    \node[circle,draw,fill,inner sep=1.5pt, label = right:$E$] (E) at (0, 1) {};
    \node[circle,draw,fill,inner sep=1.5pt, label = left:$D$] (D) at (-1, 1) {};
    \node[circle,draw,fill,inner sep=1.5pt, label = right:$C$] (C) at (.75, 1.75) {}; 
    \node[circle,draw,fill,inner sep=1.5pt, label = right:$B$] (B) at (.3, 2.6) {}; 
    \node[circle,draw,fill,inner sep=1.5pt, label = left:$A$] (A) at (-.5, 3) {};                            
    \draw (I) -- (J);
    \draw (F) -- (J);
    \draw (G) -- (H);
    \draw (F) -- (I);
    \draw (E) -- (H);
    \draw (D) -- (H);
    \draw (D) -- (G);
    \draw (D) -- (E);
    \draw (C) -- (F);             
    \draw (E) -- (C);             
    \draw (C) -- (B);                     
    \draw (B) -- (A);  
\end{tikzpicture}\\
\hline
\small  $A$ has degree one, so we record $AB$ as a forced pair. Remove $A,B$ from the graph.
Test for parity. There is one remaining  component with eight nodes, so we pass the parity check. & 
\begin{tikzpicture}[scale=.6]
    \node[circle,draw,fill,inner sep=1.5pt, label = right:$J$] (J) at (2.5, 0.5) {};
    \node[circle,draw,fill,inner sep=1.5pt, label = left:$I$] (I) at (2, 0) {};
    \node[circle,draw,fill,inner sep=1.5pt, label = right:$H$] (H) at (0, 0) {};
    \node[circle,draw,fill,inner sep=1.5pt, label = left:$G$] (G) at (-1, 0) {};
    \node[circle,draw,fill,inner sep=1.5pt, label = 30:$F$] (F) at (1.5, 1) {};
    \node[circle,draw,fill,inner sep=1.5pt, label = right:$E$] (E) at (0, 1) {};
    \node[circle,draw,fill,inner sep=1.5pt, label = left:$D$] (D) at (-1, 1) {};
    \node[circle,draw,fill,inner sep=1.5pt, label = right:$C$] (C) at (.75, 1.75) {}; 
    \node[circle,draw,fill,inner sep=1.5pt, label = right:$B$] (B) at (.3, 2.6) {}; 
    \node[circle,draw,fill,inner sep=1.5pt, label = left:$A$] (A) at (-.5, 3) {};                            
    \draw (I) -- (J);
    \draw (F) -- (J);
    \draw (G) -- (H);
    \draw (F) -- (I);
    \draw (E) -- (H);
    \draw (D) -- (H);
    \draw (D) -- (G);
    \draw (D) -- (E);
    \draw (C) -- (F);             
    \draw (E) -- (C);             
    \draw [gray,dashed] (C) -- (B);                     
    \draw [ultra thick, red] (B) -- (A);       
\end{tikzpicture}\\
\hline
\small
$C$ is lowest-indexed node of degree 2, so try to pair with $E$. However, 
one complementary component ($FIJ$) has an odd number of nodes, so we abandon this branch
of the decision tree.& 
\begin{tikzpicture}[scale=.6]
    \node[circle,draw,fill,inner sep=1.5pt, label = right:$J$] (J) at (2.5, 0.5) {};
    \node[circle,draw,fill,inner sep=1.5pt, label = left:$I$] (I) at (2, 0) {};
    \node[circle,draw,fill,inner sep=1.5pt, label = right:$H$] (H) at (0, 0) {};
    \node[circle,draw,fill,inner sep=1.5pt, label = left:$G$] (G) at (-1, 0) {};
    \node[circle,draw,fill,inner sep=1.5pt, label = 30:$F$] (F) at (1.5, 1) {};
    \node[circle,draw,fill,inner sep=1.5pt, label = right:$E$] (E) at (0, 1) {};
    \node[circle,draw,fill,inner sep=1.5pt, label = left:$D$] (D) at (-1, 1) {};
    \node[circle,draw,fill,inner sep=1.5pt, label = right:$C$] (C) at (.75, 1.75) {}; 
    \node[circle,draw,fill,inner sep=1.5pt, label = right:$B$] (B) at (.3, 2.6) {}; 
    \node[circle,draw,fill,inner sep=1.5pt, label = left:$A$] (A) at (-.5, 3) {};                            
    \draw (I) -- (J);
    \draw (F) -- (J);
    \draw (G) -- (H);
    \draw (F) -- (I);
    \draw (E) -- (H);
    \draw (D) -- (H);
    \draw (D) -- (G);
    \draw (D) -- (E);
    \draw  [gray,dashed]  (C) -- (F);             
    \draw [ultra thick, red] (E) -- (C);             
    \draw [gray,dashed] (C) -- (B);                     
    \draw [ultra thick, red] (B) -- (A);       
\end{tikzpicture}\\
\hline
\small Pairing $C$ instead with $F$ and removing both from the graph forces the $IJ$ pairing. & 
\begin{tikzpicture}[scale=.6]
    \node[circle,draw,fill,inner sep=1.5pt, label = right:$J$] (J) at (2.5, 0.5) {};
    \node[circle,draw,fill,inner sep=1.5pt, label = left:$I$] (I) at (2, 0) {};
    \node[circle,draw,fill,inner sep=1.5pt, label = right:$H$] (H) at (0, 0) {};
    \node[circle,draw,fill,inner sep=1.5pt, label = left:$G$] (G) at (-1, 0) {};
    \node[circle,draw,fill,inner sep=1.5pt, label = 30:$F$] (F) at (1.5, 1) {};
    \node[circle,draw,fill,inner sep=1.5pt, label = right:$E$] (E) at (0, 1) {};
    \node[circle,draw,fill,inner sep=1.5pt, label = left:$D$] (D) at (-1, 1) {};
    \node[circle,draw,fill,inner sep=1.5pt, label = right:$C$] (C) at (.75, 1.75) {}; 
    \node[circle,draw,fill,inner sep=1.5pt, label = right:$B$] (B) at (.3, 2.6) {}; 
    \node[circle,draw,fill,inner sep=1.5pt, label = left:$A$] (A) at (-.5, 3) {};                            
    \draw   [ultra thick, red] (I) -- (J);
    \draw  [gray,dashed] (F) -- (J);
    \draw (G) -- (H);
    \draw  [gray,dashed] (F) -- (I);
    \draw (E) -- (H);
    \draw (D) -- (H);
    \draw (D) -- (G);
    \draw (D) -- (E);
    \draw  [ultra thick, red] (C) -- (F);             
    \draw[gray,dashed] (E) -- (C);             
    \draw [gray,dashed] (C) -- (B);                     
    \draw [ultra thick, red] (B) -- (A);       
\end{tikzpicture}\\
\hline
\small Now $E$ is the lowest-indexed node of minimal degree.  By pairing
 $ED$, the next forced pairing completes the matching.  Similarly for $EH$. & 
\begin{tikzpicture}[scale=.6]
    \node[circle,draw,fill,inner sep=1.5pt, label = right:$J$] (J) at (2.5, 0.5) {};
    \node[circle,draw,fill,inner sep=1.5pt, label = left:$I$] (I) at (2, 0) {};
    \node[circle,draw,fill,inner sep=1.5pt, label = right:$H$] (H) at (0, 0) {};
    \node[circle,draw,fill,inner sep=1.5pt, label = left:$G$] (G) at (-1, 0) {};
    \node[circle,draw,fill,inner sep=1.5pt, label = 30:$F$] (F) at (1.5, 1) {};
    \node[circle,draw,fill,inner sep=1.5pt, label = right:$E$] (E) at (0, 1) {};
    \node[circle,draw,fill,inner sep=1.5pt, label = left:$D$] (D) at (-1, 1) {};
    \node[circle,draw,fill,inner sep=1.5pt, label = right:$C$] (C) at (.75, 1.75) {}; 
    \node[circle,draw,fill,inner sep=1.5pt, label = right:$B$] (B) at (.3, 2.6) {}; 
    \node[circle,draw,fill,inner sep=1.5pt, label = left:$A$] (A) at (-.5, 3) {};                            
    \draw   [ultra thick, blue] (I) -- (J);
    \draw  [gray,dashed] (F) -- (J);
    \draw [ultra thick, blue]  (G) -- (H);
    \draw  [gray,dashed] (F) -- (I);
    \draw [gray,dashed] (E) -- (H);
    \draw  [gray,dashed] (D) -- (H);
    \draw  [gray,dashed] (D) -- (G);
    \draw [ultra thick, blue]  (D) -- (E);
    \draw  [ultra thick, blue] (C) -- (F);             
    \draw[gray,dashed] (E) -- (C);             
    \draw [gray,dashed] (C) -- (B);                     
    \draw [ultra thick, blue] (B) -- (A);       
\end{tikzpicture}
\begin{tikzpicture}[scale=.6]
    \node[circle,draw,fill,inner sep=1.5pt, label = right:$J$] (J) at (2.5, 0.5) {};
    \node[circle,draw,fill,inner sep=1.5pt, label = left:$I$] (I) at (2, 0) {};
    \node[circle,draw,fill,inner sep=1.5pt, label = right:$H$] (H) at (0, 0) {};
    \node[circle,draw,fill,inner sep=1.5pt, label = left:$G$] (G) at (-1, 0) {};
    \node[circle,draw,fill,inner sep=1.5pt, label = 30:$F$] (F) at (1.5, 1) {};
    \node[circle,draw,fill,inner sep=1.5pt, label = right:$E$] (E) at (0, 1) {};
    \node[circle,draw,fill,inner sep=1.5pt, label = left:$D$] (D) at (-1, 1) {};
    \node[circle,draw,fill,inner sep=1.5pt, label = right:$C$] (C) at (.75, 1.75) {}; 
    \node[circle,draw,fill,inner sep=1.5pt, label = right:$B$] (B) at (.3, 2.6) {}; 
    \node[circle,draw,fill,inner sep=1.5pt, label = left:$A$] (A) at (-.5, 3) {};                            
    \draw   [ultra thick, blue] (I) -- (J);
    \draw  [gray,dashed] (F) -- (J);
    \draw  [gray,dashed] (G) -- (H);
    \draw  [gray,dashed] (F) -- (I);
    \draw  [ultra thick, blue]  (E) -- (H);
    \draw [gray,dashed] (D) -- (H);
    \draw   [ultra thick, blue]  (D) -- (G);
    \draw [gray,dashed] (D) -- (E);
    \draw  [ultra thick, blue] (C) -- (F);             
    \draw[gray,dashed] (E) -- (C);             
    \draw [gray,dashed] (C) -- (B);                     
    \draw [ultra thick, blue] (B) -- (A);       
\end{tikzpicture}
\end{tabular}}

\noindent In the end, we find the two perfect matchings $AB/CF/IJ/ED/GH$ and $AB/CF/IJ/EH/GD$.

\subsection{Prune-and-choose algorithm validity}
\label{appendix:validity}

In this section we formally describe the prune-and-choose method and provide a proof of correctness. Pseudo-code for the algorithm is 
given here  and our implementation in Python is available at \cite{Alaska-Repo}. We introduce some additional notation to describe the method. The subgraph of $G$ induced by deleting nodes $u$ and $v$ will be denoted $G\setminus\{u,v\}$. We will represent a matching as a set of edges $M=\{(u_1,v_1), (u_2,v_2),\ldots, (u_\ell,v_\ell)\}$. We assume that the vertices of $G$ are ordered in order to provide a deterministic algorithm.  To generate the full set of matchings for a graph $G$, we would call $\textsc{FindMatchings}(G,\emptyset)$.


\begin{algorithm}
\caption{Pseudo-code for Prune-and-Choose Algorithm to Find All Perfect Matchings in a Graph  $G$}\label{alg:PM}
\begin{algorithmic}[1]
\Procedure{FindMatchings}{$G, M$}\Comment{Input a graph $G$ and the current set of matched edges $M$}
\If{$G$ is connected and has exactly two vertices $u, v$}
\State  \textbf{return} $G\setminus \{u,v\}$, $M \cup (u,v)$ 
\ElsIf{G has any vertex with exactly one neighbor}
\State \textbf{prune:} let $u$ be the first degree-one vertex; let $v$ be its neighbor
\State \textbf{return} \Call{FindMatchings}{$G\setminus\{u,v\},M\cup(u,v)$} \Comment{Pair forced vertices and recurse. } 
\ElsIf{$G$ contains a component with an odd number of vertices}
\State \textbf{break} \Comment{There are no perfect matchings in $G$}
\Else
\State let $u$ be first vertex with a minimum number of neighbors $v_1, ..., v_k$
\State for $1\leq i\leq k,$ let $G_i = G\setminus \{u,v_i\}$ and $M_i = M \cup (u,v_i)$
\State \textbf{return} $\bigcup_{i=1}^k$\Call{FindMatchings}{$G_i$, $M_i$}  \Comment{Recurse to find all perfect matchings with each pair}
\EndIf
\EndProcedure
\end{algorithmic}
\end{algorithm}

 We next show that the algorithm returns the correct set of perfect matchings on any graph. They key idea of the algorithm and the proof is that for any edge of the graph, the set of perfect matchings that contain edge $(u,v)$ can be computed by finding all perfect matchings in the subgraph $G\setminus\{u,v\}$. This is an example of the self--reducible nature of the perfect matching problem which is discussed in more detail below. 
 
\begin{theorem}
The prune-and-choose algorithm correctly finds all perfect matchings in the input graph.
\end{theorem}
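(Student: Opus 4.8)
The plan is to prove a slightly stronger statement than the theorem as literally stated, by strong induction on $n = |V(G)|$: namely, that for \emph{every} graph $G$ and every committed edge set $M$, the call $\textsc{FindMatchings}(G,M)$ returns exactly $\{M \cup N : N \in \mathcal{M}(G)\}$, where $\mathcal{M}(G)$ denotes the set of perfect matchings of $G$. Specializing to $M = \emptyset$ then recovers the theorem, since $\textsc{FindMatchings}(G,\emptyset)$ returns $\mathcal{M}(G)$. Carrying the accumulator $M$ through the induction hypothesis is the key bookkeeping move that makes the recursion close, because the recursive calls are always issued with an \emph{enlarged} committed set $M \cup (u,v)$, and we need to know what each such call returns.

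The engine of the argument is the self-reducibility lemma already flagged before the theorem statement: for any edge $(u,v)$ of $G$, the map $N \mapsto N \setminus \{(u,v)\}$ is a bijection from $\{N \in \mathcal{M}(G) : (u,v) \in N\}$ onto $\mathcal{M}(G\setminus\{u,v\})$, with inverse $N' \mapsto N' \cup \{(u,v)\}$. I would dispatch this in one line: deleting the edge $(u,v)$ from a perfect matching containing it leaves every vertex other than $u$ and $v$ covered exactly once by edges incident to neither $u$ nor $v$, which is precisely a perfect matching of $G\setminus\{u,v\}$; the reverse map plainly undoes this, and the two are mutually inverse.

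With the lemma in hand, I would verify each of the four branches of the procedure against the induction hypothesis. \textbf{Base case:} if $G$ is connected on two vertices $u,v$, its unique perfect matching is $\{(u,v)\}$, and the procedure returns $M \cup (u,v)$, matching $\{M \cup N : N \in \mathcal{M}(G)\}$. \textbf{Prune:} if $u$ has degree one with neighbor $v$, then every perfect matching of $G$ must contain $(u,v)$, so by the lemma $\mathcal{M}(G)$ is in bijection with $\mathcal{M}(G\setminus\{u,v\})$; the returned value is $\textsc{FindMatchings}(G\setminus\{u,v\}, M\cup(u,v))$, which by the inductive hypothesis (applicable since $|V(G\setminus\{u,v\})| = n-2$) equals $\{(M\cup(u,v))\cup N' : N' \in \mathcal{M}(G\setminus\{u,v\})\} = \{M \cup N : N \in \mathcal{M}(G)\}$. \textbf{Parity:} an odd-order component admits no perfect matching (each matching edge covers two vertices, so a component's matched vertices come in pairs), hence $\mathcal{M}(G) = \emptyset$ and the procedure correctly returns nothing. \textbf{Choice:} here $u$ has neighbors $v_1,\dots,v_k$, and since $u$ is matched to exactly one neighbor in any perfect matching, $\mathcal{M}(G) = \bigsqcup_{i=1}^k \{N \in \mathcal{M}(G) : (u,v_i) \in N\}$ is a \emph{disjoint} union; applying the lemma to each part and the inductive hypothesis to each recursive call on $G_i = G\setminus\{u,v_i\}$ with $M_i = M\cup(u,v_i)$ yields exactly $\bigcup_{i=1}^k \{M_i \cup N' : N' \in \mathcal{M}(G_i)\} = \{M\cup N : N \in \mathcal{M}(G)\}$, with no duplication since distinct $v_i$ force distinct edges at $u$.

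I would close by noting termination: every recursive call strictly decreases $|V(G)|$ by two and bottoms out at the base or break branches, so the recursion has depth at most $n/2$ and the induction is well-founded. The main obstacle is not any single case but getting the inductive statement exactly right: one must phrase the claim with the accumulator $M$ (rather than asserting only the $M=\emptyset$ conclusion) and verify that the four guard conditions are mutually exclusive and exhaustive in the intended order, so that precisely one branch fires on each call. Once the self-reducibility lemma and the disjointness of the choice-step partition are in place, the remaining verifications are routine.
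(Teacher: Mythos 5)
Your proof is correct and takes essentially the same route as the paper's: induction on the number of vertices, with a case analysis tracking the algorithm's base/prune/parity/choice branches and the key observation that perfect matchings containing an edge $(u,v)$ correspond bijectively to perfect matchings of $G\setminus\{u,v\}$. Your explicit strengthening of the inductive statement to carry the accumulator $M$, and your remarks on disjointness of the choice-step union and on termination, are careful bookkeeping refinements of the same argument rather than a different approach.
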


\begin{proof}


We consider any graph $G$ with $n=2k$ vertices and proceed by induction on $k$.  When $k=1,$ $G$ is either connected (in which case the algorithm correctly finds the unique perfect matching at lines 2-3) or has two isolated vertices and no perfect matchings (which the algorithm correctly reports in lines 7-8).  

For $k>1$ the algorithm proceeds according to exactly one of the following three cases:
\begin{enumerate}
    \item If $G$ contains a leaf $u$ with neighbor $v$, then $u$ must be matched to $v$ in any perfect matching. Line 6 then calls \textsc{FindMatchings}  on $G\setminus\{u,v\}$  which returns the correct set of matchings of $G\setminus\{u,v\}$,  by our inductive hypothesis. Adding $(u,v)$ to each matching returned by this function gives the full set of matchings for $G$.
    \item If $G$ contains no leaves and some connected component of $G$ has an odd number of vertices, then there are no perfect matchings in $G$ and the algorithm correctly terminates at lines 7-8.
    \item If $G$ contains no leaves and each connected component of $G$ has an even number of vertices, then there exists a vertex of minimal index $u$ which has a minimum number of neighbors. Since $G$ has no leaves and no odd components, $u$ has degree at least $2$.  In any perfect matching, it must be matched to one of its neighbors $v_1, ..., v_k$.  The algorithm considers each possibility calling \textsc{FindMatchings}  on $G\setminus\{u,v_i\}$ which returns the correct set of matchings by our inductive hypothesis. As in step 1, adding $(u,v_i)$ to the matchings returned on $G\setminus\{u,v_i\}$ provides a complete set of matchings for $G$. 
\end{enumerate}   
Thus for any graph $G$, the first pass through the algorithm either returns $\emptyset$, which only occurs if $G$ has no perfect matchings, or it calls the algorithm recursively on a graph of size $2(k-1)$. These recursive calls satisfy our inductive hypothesis and hence we obtain the complete set of matchings for $G$. 
\end{proof}

We note that well-known classes of planar graphs have exponentially many perfect matchings.  For example, this is true of the $n\times n$ grids  \cite{Bjo10, kasteleyn_statistics_1961,temperley_dimer_1961}. This trivially implies that there is no polynomial-time algorithm to list them all as output. As we discuss in Appendix \ref{sec:Enum}, the dual graphs of real-world districting plans often have more perfect matchings than a grid graph of comparable size. In that section we also provide timing results for our algorithm that demonstrate that it is adequately fast for several problems
at realistic scale, but not all. In Appendix~\ref{appendix:sampling} below we show how a sampling approach can be employed to those settings in which listing all perfect matchings is computationally infeasible. 


\section{Enumerating matchings}\label{sec:Enum}

Here we apply FKT and Prune-and-Choose to compute the number of potential matchings for each of the eight states that require Senate districts to be formed from adjacent pairs of House districts.  We use the standard Census shapefiles
to generate dual graphs for each state, making them parallel to the 
permissive graph for Alaska.  Visualizations of these dual graphs are shown in Appendix A.

 \begin{table}[ht]
 \centering
 \begin{tabular}{|r|c|c|c|c|}
 \hline
 &\small Alaska&\small Illinois&\small Iowa&\small Minnesota\\
 \hline
 \hline
\small House districts&40 &118 &100 &134 \\
 \hline
 \small Dual edges&100&326&251&260\\
 \hline
 \small Matchings&\footnotesize 108,765&{\footnotesize 9,380,573,911}&{\footnotesize 1,494,354,140,511}&
{\footnotesize  6,156,723,718,225,577,984}\\
 \hline
 \small FKT runtime&0.027 sec &0.39 sec&0.21 sec&0.53 sec\\
 \hline
\end{tabular}

\medskip

 \begin{tabular}{|r|c|c|c|c|}
 \hline
 &\small Montana&\small Nevada&\small Oregon&\small Wyoming\\
 \hline
 \hline
 \small House districts&100&42&60 &60 \\
 \hline
 \small Dual edges&269&111&158&143\\
 \hline
 \small Matchings&{\footnotesize 11,629,786,967,358}&
 \footnotesize 313,698&\footnotesize 229,968,613&
 \footnotesize 920,864\\
 \hline
 \small FKT runtime&0.24 sec&0.038 sec&0.079 sec&0.056 sec\\
 \hline
 \end{tabular}
 \caption{Number of matchings possible with 
 respect to the current House plan for each state (with dual graphs generated from census shapefiles) and timings for computing them with FKT.\label{tab:timingdata}}
 \end{table}

Alaska's 108,765 (permissive) matchings are the
fewest among the eight states. 
This is partially due to the fact that Alaska has fewer House districts than the other states, and partly due to 
lower edge density and several forced matchings.%
\footnote{The four districts in the Southeast corner of the state must be paired  (33--34 and 35--36), further restricting the possible matchings.} 
The number of matchings varies
greatly across the other states, with Minnesota having the most 
at  $6.1\times 10^{18}$, over six quintillion. 
By contrast, the number of matchings in a
$10\times 10$ grid with 100 nodes is 258,584,046,368 (fewer
than Iowa, which has 100 House districts) and 
a $12\times 12$ grid with 144 nodes has 
53,060,477,521,960,000 (fewer than Minnesota, which has 134 House districts). 
To understand why the states' matching numbers exceed those of comparably sized grids, 
consider the impact of just a few extra edges.  
Adding just four edges to the $10\times 10$ grid---a single diagonal edge from each of the four corner vertices to its diagonal neighbor---increases the number of matchings by 745,241,088.


Extrapolating the prune-and-choose timing from Nevada (299 seconds) and Wyoming (851 seconds) suggests that generating all of the matchings for some of the other states would take prohibitively long---even with linear scaling, the Wyoming timing
suggests that the Minnesota computation would take some 180 million years.
However, it is possible to sample matchings from planar graphs uniformly, allowing
for good estimates of relevant statistics. We have implemented the technique suggested in \cite{Jerrum} for this purpose \cite{Alaska-Repo} and in Appendix~\ref{appendix:sampling} we validate  this approach 
on Alaska.

\section{Sampling and extremization over
matchings}
\label{appendix:sampling}

For Minnesota's 6.1 quintillion matchings, it would be 
prohibitively inefficient to list them all, no matter the algorithmic design. On the other hand, we can construct uniform samples of the full set of matchings by making use of the {\em self-reducible structure} in the perfect matching problem \cite{Jerrum} as follows.
We can compute the likelihood that a given edge appears in a perfect matching by deleting the edge from the graph and enumerating the matchings on the remaining nodes with FKT. The ratio of matchings
on the leftover to total matchings is the probability that the edge
is used.  With this, we can iterate, starting with the original graph and adding a single edge to the matching at each step with appropriate probability. Since FKT  runs in polynomial time, so does our sampling procedure, since a perfect matching requires $\frac{n}{2}$ edges and finding the probabilities used to select each edge requires at most $\binom{n}{2}$ FKT evaluations.

We next demonstrate that the uniform sampling method can attain good accuracy with a reasonably small number of samples, using the case of Alaska where we can compare to the ground truth from the full matching set. For each of our three dual graphs, we sample 100 matchings uniformly and compare the resulting statistics to those of the full set of matchings. Figure \ref{fig:sample} shows these comparisons. Although the distributions are not identical, they are quite similar and the sample means vary only by small fractions of a seat from the actual values. 

\begin{figure}[!h]
    \centering
\begin{tikzpicture}[scale=.7]
\node at (0,-3) {\small Cong18-A D Senate seats} ;
\node at (8,-3) {\small Gov 18-A D Senate seats};
\node at (0,0) {\includegraphics[width=2in]{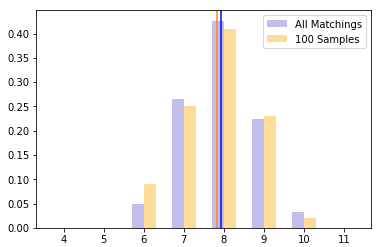}};
\node at (8,0) {\includegraphics[width=2in]{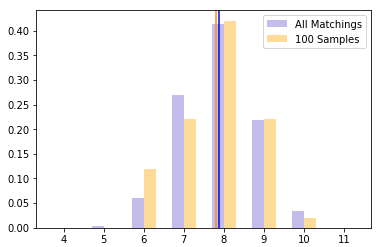}};
\end{tikzpicture}

\vspace{.2in}

\begin{tabular}{|l|c|c|c|}
    \hline
      abs. error   &  Tight&Restricted&Permissive\\
      \hline
      \hline
    Cong18-A &0.0082&0.0234& 0.0868\\
    \hline
    Gov18-A & 0.0203&0.0361&0.0814\\
    \hline
    \end{tabular} 
    \caption{Comparison of number of Democratic Senate districts in a uniform sample of 100 (permissive) matchings to the full collection of matchings. The table shows the absolute error in the average seats total for this and the other
    two Alaska dual graphs.  The histograms show more detail, and illustrate how close the averages are with only 1/1000 of the space being sampled.  \label{fig:sample}}
\end{figure}

This example shows that even  a sample of modest size produces a good estimate of the full distribution. This provides support for our assertion that this procedure can be carried out successfully on states like Minnesota, where it would be computationally infeasible to generate all matchings.  We note that all materials are available in our code repositories for others
to perform this sampling for the other matching states, but there will be a non-trivial data setup cost in choosing appropriate
election data and cleaning it for the analysis.

Though the histograms above are quite similar, the 
sample fails to capture the full range of 
seat outcomes in the Governor's race:  a small number of possible matchings result in 
five D seats, but that is never observed in the sample.
A second algorithm may be employed to provably find the correct 
range of seats outcomes possible, again without fully listing 
the matchings.  Finding perfect matchings of extremal weight, given an edge-weighted graph, is a classic problem in combinatorics, solved for instance with the Blossom algorithm
developed by Edmonds in the 1960s \cite{Ed65a, Ed65b}.
To apply that in this setting, we use any given pattern of votes to assign a 
weight to each edge of our dual graph: an edge $\{u, v\}$ linking two House districts $u$ and $v$ is given weight 1 if there are more D than R votes in the hypothetical Senate district that combines $u$ and $v$. Otherwise, assign  weight 0.  The weight of the perfect matching is defined as the sum of the weights of its edges.  By 
construction, this is the number of D seats in that matching.
For more background on extremal perfect matchings, see for instance Chapters 25-26 of \cite{Sch03}.


As a final note, knowing these extremes also informs the size of a uniform sample necessary to estimate the true distribution to a desired precision. A detailed discussion of the precise number of samples needed for various estimates is presented in \cite{bo}. In particular, Theorem 5.3 shows that with failure rate $\delta$, taking $\max\left(\frac{4}{\varepsilon^2},\frac{4\ln(\frac{1}{\delta})}{\varepsilon^2}\right)$ samples suffices to estimate the probability of each individual outcome to within $\varepsilon$
(i.e., an $L^\infty$ bound) whereas $\max\left(\frac{4n}{\varepsilon^2},\frac{8\ln(\frac{1}{\delta})}{\varepsilon^2}\right)$ samples suffice to bound the sum of the absolute differences between the individual estimates and the true probabilities (an $L^1$ bound).


\bibliographystyle{plain}

\bibliography{main}

\begin{thebibliography}{10}

\bibitem{AK_Elections}
{Alaska Division of Elections}.
\newblock Research information.
\newblock http://www.elections.alaska.gov/doc/info/ElectionResults.php.

\bibitem{hoffman}
{Alaska Senate Majority Profile}.
\newblock Senator {Lyman} {Hoffman}.
\newblock https://www.alaskasenate.org/2018/member/lyman-hoffman/.

\bibitem{AP}
{Associated Press}.
\newblock {LeDoux stripped of assignments after break with caucus}, May 2019.
\newblock https://www.apnews.com/2ab3db7cbe1448abae74989d8f2785fb.

\bibitem{BP}
Ballotpedia.
\newblock {Alaska House of Representatives elections, 2018}, 2019.
\newblock
  https://ballotpedia.org/Alaska\_House\_of\_Representatives\_elections\_2018.

\bibitem{mattingly}
Sachet Bangia, Christy~Vaughn Graves, Gregory Herschlag, Han~Sung Kang, Justin
  Luo, Jonathan~C. Mattingly, and Robert Ravier.
\newblock Redistricting: {Drawing} the {Line}.
\newblock {\em arXiv:1704.03360 [stat]}, April 2017.
\newblock arXiv: 1704.03360.

\bibitem{best_authors_2017}
Robin~E. Best, Shawn~J. Donahue, Jonathan Krasno, Daniel~B. Magleby, and
  Michael~D. McDonald.
\newblock Authors' {Response}– {Values} and {Validations}: {Proper}
  {Criteria} for {Comparing} {Standards} for {Packing} {Gerrymanders}.
\newblock {\em Election Law Journal: Rules, Politics, and Policy},
  17(1):82--84, November 2017.

\bibitem{best_considering_2017}
Robin~E. Best, Shawn~J. Donahue, Jonathan Krasno, Daniel~B. Magleby, and
  Michael~D. McDonald.
\newblock Considering the {Prospects} for {Establishing} a {Packing}
  {Gerrymandering} {Standard}.
\newblock {\em Election Law Journal: Rules, Politics, and Policy}, 17(1):1--20,
  September 2017.

\bibitem{Bjo10}
Anders Bj{\"o}rner and Richard~P Stanley.
\newblock {\em A combinatorial miscellany}.
\newblock L'Enseignement math{\'e}matique, 2010.

\bibitem{burton-pemantle}
Robert Burton and Robin Pemantle.
\newblock Local {Characteristics}, {Entropy} and {Limit} {Theorems} for
  {Spanning} {Trees} and {Domino} {Tilings} {Via} {Transfer}-{Impedances}.
\newblock {\em The Annals of Probability}, 21(3):1329--1371, July 1993.

\bibitem{caldwell}
Susanna Caldwell.
\newblock {Voting Rights Act: What does ruling mean for Alaskans? }.
\newblock {\em Anchorage Daily News}, 2013.
\newblock
  https://www.adn.com/politics/article/supreme-court-decision-voting-rights-may-ease-alaska-redistricting-work/2013/06/25/.

\bibitem{pegden}
Maria Chikina, Alan Frieze, and Wesley Pegden.
\newblock Assessing significance in a {Markov} chain without mixing.
\newblock {\em Proceedings of the National Academy of Sciences},
  114(11):2860--2864, March 2017.

\bibitem{VA-criteria}
Daryl DeFord and Moon Duchin.
\newblock Redistricting reform in {Virginia}: {Districting} criteria in
  context.
\newblock {\em Virginia Policy Review}, 12:120--146,, 2019.

\bibitem{va-report}
Daryl DeFord, Moon Duchin, and Justin Solomon.
\newblock Recombination: A family of markov chains for redistricting, 2019.
\newblock arXiv:1911.05725.

\bibitem{Ed65a}
Jack Edmonds.
\newblock Maximum matching and a polyhedron with 0, 1-vertices.
\newblock {\em Journal of research of the National Bureau of Standards B},
  69(125-130):55--56, 1965.

\bibitem{Ed65b}
Jack Edmonds.
\newblock Paths, trees, and flowers.
\newblock {\em Canadian Journal of Mathematics}, 17:449--467, 1965.

\bibitem{Epler_ADN2011}
Patti Epler.
\newblock Alaska redistricting board gets to work.
\newblock {\em Anchorage Daily News}, 2011.
\newblock
  https://www.adn.com/alaska-news/article/alaska-redistricting-board-gets-work/2011/03/17/.

\bibitem{Jerrum}
Mark~R. Jerrum, Leslie~G. Valiant, and Vijay~V. Vazirani.
\newblock Random generation of combinatorial structures from a uniform
  distribution.
\newblock {\em Theoretical Computer Science}, 43:169 -- 188, 1986.

\bibitem{kasteleyn_statistics_1961}
P.~W. Kasteleyn.
\newblock The statistics of dimers on a lattice : {I}. {The} number of dimer
  arrangements on a quadratic lattice.
\newblock {\em Physica}, 27:1209--1225, December 1961.

\bibitem{kenyon}
Richard Kenyon.
\newblock The asymptotic determinant of the discrete {Laplacian}.
\newblock {\em Acta Mathematica}, 185(2):239--286, September 2000.

\bibitem{kenyon-okounkov}
Richard Kenyon and Andrei Okounkov.
\newblock What is a dimer?
\newblock {\em Notices of the AMS}, 52:342--343, 2005.

\bibitem{kpw}
Richard Kenyon, James Propp, and David Wilson.
\newblock Trees and matchings.
\newblock {\em The Electronic Journal of Combinatorics}, 7(1):Research paper
  R25, 34 p., 2000.

\bibitem{levitt2008citizen}
Justin Levitt.
\newblock A {C}itizen's {G}uide to {R}edistricting.
\newblock {\em Brennan Center Report}, 2008.

\bibitem{Loehr}
Nicholas Loehr.
\newblock {\em Bijective Combinatorics}.
\newblock CRC Press, 2011.

\bibitem{Lov09}
L{\'a}szl{\'o} Lov{\'a}sz and Michael~D Plummer.
\newblock {\em Matching theory}, volume 367.
\newblock American Mathematical Soc., 2009.

\bibitem{Mauer_ADN2013}
Richard Mauer.
\newblock Judge scolds {Alaska} redistricting board.
\newblock {\em Anchorage Daily News}, 2013.

\bibitem{mcghee_rejoinder_2017}
Eric McGhee.
\newblock Rejoinder to “{Considering} the {Prospects} for {Establishing} a
  {Packing} {Gerrymandering} {Standard}”.
\newblock {\em Election Law Journal: Rules, Politics, and Policy},
  17(1):73--82, October 2017.

\bibitem{gerrychain}
{Metric Geometry and Gerrymandering Group}.
\newblock Gerry{C}hain.
\newblock {\em GitHub repository}, 2018.
\newblock https://github.com/mggg/gerrychain.

\bibitem{mggg-states}
{Metric Geometry and Gerrymandering Group}.
\newblock Mggg-{S}tates.
\newblock {\em GitHub repository}, 2018.
\newblock https://github.com/mggg/mggg-states.

\bibitem{Alaska-Repo}
{Metric Geometry and Gerrymandering Group}.
\newblock Alaska.
\newblock {\em GitHub repository}, 2019.
\newblock https://github.com/mggg/Alaska.

\bibitem{maup}
{Metric Geometry and Gerrymandering Group}.
\newblock Maup.
\newblock {\em GitHub repository}, 2019.
\newblock https://github.com/mggg/maup.

\bibitem{Sch03}
Alexander Schrijver.
\newblock {\em Combinatorial optimization: {P}olyhedra and efficiency},
  volume~24.
\newblock Springer Science \& Business Media, 2003.

\bibitem{constitution}
{State Constitution of Alaska}.
\newblock Article 6 - {Legislative} {Apportionment}.

\bibitem{temperley}
H.~N.~V. Temperley.
\newblock Enumeration of graphs on a large periodic lattice.
\newblock {\em Combinatorics (Proc. British Combinatorial Conf., Univ. Coll.
  Wales, Aberystwyth, 1973)}, pages 155--159. London Math. Soc. Lecture Note
  Ser., No. 13, 1974.

\bibitem{temperley_dimer_1961}
H.~N.~V. Temperley and Michael~E. Fisher.
\newblock Dimer problem in statistical mechanics-an exact result.
\newblock {\em The Philosophical Magazine: A Journal of Theoretical
  Experimental and Applied Physics}, 6(68):1061--1063, August 1961.

\bibitem{DOJ}
{U.S. Department of Justice}.
\newblock Preclearance, 2015.
\newblock
  https://www.justice.gov/crt/jurisdictions-previously-covered-section-5.

\bibitem{valiant}
L.~G. Valiant.
\newblock The complexity of computing the permanent.
\newblock {\em Theoretical Computer Science}, 8(2):189--201, January 1979.

\bibitem{bo}
Bo~Waggoner.
\newblock Lp testing and learning of discrete distributions.
\newblock In {\em Proceedings of the 2015 Conference on Innovations in
  Theoretical Computer Science}, ITCS '15, pages 347--356, New York, NY, USA,
  2015. ACM.

\end{thebibliography}

\end{document}